\newtheorem{theorem}{Theorem}[section]
\newtheorem{corollary}[theorem]{Corollary}
\newenvironment{remark*}[1][Remark]{\begin{trivlist}
\item[\hskip \labelsep {\bfseries #1}]}{\nopagebreak\flushright$\square$\end{trivlist}}
\newenvironment{proof}[1][Proof]{\begin{trivlist}
\item[\hskip \labelsep {\bfseries #1}]}{\nopagebreak\flushright$\square$\end{trivlist}}
\newcommand{\bigtimes}{\mathop{\mbox{\LARGE$\boldsymbol\times$}}}
\renewcommand{\phi}{\varphi}
\renewcommand{\rho}{\varrho}
\renewcommand{\theta}{\vartheta}
\newcommand{\eps}{\ensuremath{\varepsilon}}
\renewcommand{\d}{\partial}
\newcommand{\fa}{\forall}
\newcommand{\lbetr}{\left\lvert}
\newcommand{\rbetr}{\right\lvert}
\newcommand{\betr}[1]{\lbetr {#1}\rbetr}
\renewcommand{\l}{\ensuremath{\left}}
\renewcommand{\r}{\ensuremath{\right}}\newcommand{\ubr}{\underbrace}
\newcommand{\opn}{\operatorname}
\newcommand{\then}{\ensuremath{\Rightarrow}}
\newcommand{\Df}{\ensuremath{\mathfrak{D}}}
\newcommand{\nn}[1][{}]{\ensuremath{\mathbb{N}^{#1}}}
\newcommand{\rn}[1][{}]{\ensuremath{\mathbb{R}^{#1}}}
\newcommand{\cn}[1][{}]{\ensuremath{\mathbb{C}^{#1}}}
\newcommand{\zn}[1][{}]{\ensuremath{\mathbb{Z}^{#1}}}
\newcommand{\Arsinh}[1][{}]{\ensuremath{\mathrm{Ar\ sinh}}}
\newcommand{\tr}[1][{}]{\ensuremath{\opn{tr}}}
\newcommand{\adj}[1][{}]{\ensuremath{\opn{adj}}}
\newcommand{\cof}[1][{}]{\ensuremath{\opn{cof}}}
\let\stdparagraph\paragraph
\renewcommand\paragraph{\vspace{1em}\stdparagraph}
\def\ps@pprintTitle{
 \let\@oddhead\@empty
 \let\@evenhead\@empty
 \def\@oddfoot{}%
 \let\@evenfoot\@oddfoot}
\begin{document}

\begin{frontmatter}

\title{Overcoming the sign problem in 1-dimensional QCD by new integration 
rules with polynomial exactness}

\renewcommand{\thefootnote}{\fnsymbol{footnote}}

\author[d]{A.~Ammon}
\ead{andreas.ammon@desy.de}

\author[b]{T.~Hartung}
\ead{tobias.hartung@kcl.ac.uk}

\author[a]{K.~Jansen}
\ead{karl.jansen@desy.de}

\author[c]{H.~Le\"ovey}
\ead{leovey@math.hu-berlin.de}

\author[a]{J.~Volmer}
\ead{julia.volmer@desy.de}

\address[a]{NIC, DESY Zeuthen, Platanenallee 6, D-15738 Zeuthen, Germany}
\address[b]{Department of Mathematics, King's College London, Strand, London WC2R 2LS, United Kingdom }
\address[c]{Institut f\"ur Mathematik, Humboldt-Universit\"at zu Berlin, Unter den Linden 6, D-10099 Berlin}
\address[d]{IVU Traffic Technologies AG, Bundesallee 88, 12161 Berlin, Germany}

\begin{textblock}{15}(12.2,0.5)
\setlength{\parindent}{0cm}
DESY 16-129
\end{textblock}

\begin{abstract}
  In this paper we describe a new integration method for the groups
  $U(N)$ and $SU(N)$, for which we verified numerically that it is
  polynomially exact for $N\le 3$. The method is applied to the
  example of 1-dimensional QCD with a chemical potential. We explore,
  in particular, regions of the parameter space in which the sign
  problem appears due the presence of the chemical potential. While
  Markov Chain Monte Carlo fails in this region, our new integration
  method still provides results for the chiral condensate on arbitrary
  precision, demonstrating clearly that it overcomes the sign problem.
  Furthermore, we demonstrate that our new method leads to orders of
  magnitude reduced errors also in other regions of parameter space.
\end{abstract}

\begin{keyword}
sign problem \sep polynomially exact integration \sep 1-dimensional QCD \sep 
chemical potential 
\sep lattice systems 
\end{keyword}

\end{frontmatter}


\section{Introduction}

The sign problem in models of statistical and high energy physics
constitutes one of the greatest challenges for computational sciences,
because of the difficulty to evaluate such systems
\cite{Troyer:2004ge}. Many attempts using various techniques have been
developed but no general solution to overcome the sign problem has
been found so far \cite{Gattringer:2016kco}. On the other hand, the
sign problem appears in important problems in physics. For example, in
high energy physics, the sign problem prevents to fully understand the
physics of the early universe and to explain and interpret heavy ion
collisions. In order to progress with these questions, simulations
within the framework of lattice QCD with a non-zero chemical potential
would be required. However, these are impossible with present
techniques; see refs.~\cite{Sexty:2014dxa,Gattringer:2014nxa} for
recent reviews. The reason is that standard computations in lattice
QCD employ Markov Chain Monte Carlo (MC-MC) methods which need a
positiv integrand in order to be applicable. However, in the problem
just mentioned a chemical potential is required leading to a complex
integrand and therefore to an oscillating function. In particular, if
the sign cancellation errors due to the plural oscillations are of
significantly higher magnitude than the real integral value, it
becomes unfeasible to evaluate such systems.

Therefore, 
alternative approaches to MC-MC methods
need to be developed and 
in \cite{Jansen:2013jpa,Ammon:2015mra} we have proposed and tested 
Quasi Monte Carlo and iterated numerical integration
techniques. These methods can 
improve the convergence of the involved integrations and 
also have the potential to deal with the sign problem. 
However, 
in this paper we discuss yet another 
method of numerical integration for generic systems with a sign problem. 
This new method 
leads to an arbitrarily precise evaluation of the involved integrals and 
is based on a {\em complete symmetrization} of the integrals considered. 

This can be achieved through new integration rules on compact groups,
as developed in this article, which lead to polynomial exactness. We
test the method on the example of 1-dimensional QCD with a chemical
potential, see e.g. \cite{Ravagli:2007rw}, for which already other
approaches have been used to solve the sign problem \cite{Aarts2010}.
Although 1-dimensional QCD
is a model with an interest in its own as the strong coupling limit of
QCD \cite{Ilgenfritz:1984ff}, we consider it here only as a benchmark
model for testing our approach, especially since it is possible to
compute observables analytically and, thus, check the numerical
results directly. In particular, we will compute the chiral condensate
for a broad range of action parameters, including values of the
chemical potential that are impossible (for all practical purposes) to
address with standard Monte Carlo techniques.

The idea to symmetrize the involved integrals in a MC-MC 
simulation to achieve positivity 
and stable results 
has also been proposed in refs.~\cite{Bloch:2013ara,Bloch:2013qva}. 
However, in these works only an incomplete symmetrization has
been used and still a large number of Monte Carlo samples 
were necessary to obtain accurate results. 
In our approach, we perform a polynomially exact integration
avoiding the MC-MC step. This way,  
we only need a very small number of 
integration points. 
In fact, we can reach arbitrary (up to machine) precision 
for the targeted physical observables and 
avoid the MC error completely. 

For our computations, we employ the compact groups $U(N)$ and $SU(N)$
and give a description for a complete symmetrization for $N \le 3$. As
we will demonstrate, for these cases with our new approach the sign
problem is completely avoided.

This paper is composed in the following way: In section
\ref{sec:1dqcd}, we introduce the model of 1-dimensional QCD, show
analytic results of the partition function $Z$, and demonstrate the
difficulty to compute $Z$ for specific parameters numerically. In
section \ref{sec:symmetrization}, we describe the polynomially exact
method based on completely symmetrized spherical quadrature rules
\cite{Genz}. In section \ref{Numres}, we explain our numerical
computations in more detail, show results for the partition function
and the chiral condensate, and explain their behavior for different
parameter values. In section \ref{sec:conclusion}, we finally conclude
this paper.

\section{One dimensional lattice QCD}\label{sec:1dqcd}
Let us consider the following Dirac operator (cf., e.g., \cite{Ravagli:2007rw})
for a lattice with $n$ points 
\begin{align}
  \begin{aligned}
    \Df(U)
    =
    \begin{pmatrix}
      m&\frac{e^\mu}{2} U_{1}&&&&\frac{e^{-\mu}}{2} U_{n}^*\\
      -\frac{e^{-\mu}}{2} U_{1}^*&m&\frac{e^\mu}{2} U_{2}&&&\\
      &-\frac{e^{-\mu}}{2} U_{2}^*&m&\frac{e^\mu}{2} U_{3}&&\\
      &&\ddots&\ddots&\ddots&\\
      &&&-\frac{e^{-\mu}}{2} U_{n-2}^*&m&\frac{e^{-\mu}}{2} U_{n-1}\\
      -\frac{e^\mu}{2} U_{n}&&&&-\frac{e^{-\mu}}{2} U_{n-1}^*&m
    \end{pmatrix}
  \end{aligned}
\end{align}
where all empty entries are zero and the corresponding one flavor partition function 
\begin{align}
  &Z(m,\mu,G,n)=\int_{G^n}\det\Df(U)\ dh_G^n(U) \label{equ:partFunc}
\end{align}
where $G=U(N)$ or $G=SU(N)$, $N\in \nn$,  and $h_G$ is the corresponding (normalized) Haar measure on $G$.

In order to reduce the numerical effort in calculating $\det\Df$, we will first reduce the dimension using the following theorem. 

\begin{theorem}\label{det-reduction}
  Let $U_0:=U_n$, $\tilde m_1:=m$,
  \begin{align}
    \fa j\in[2,n-1]\cap\nn:\ \tilde m_j:=m+\frac{1}{4\tilde m_{j-1}},
  \end{align}
  and 
  \begin{align}
    \tilde m_n:=m+\frac{1}{4\tilde m_{n-1}}+\sum_{j=1}^{n-1}\frac{(-1)^{j+1}2^{-2j}}{\tilde m_j\prod_{k=1}^{j-1}\tilde m_k^2}.
  \end{align}  
  Then,
  \begin{align}
    \det\Df=&\det\l(\prod_{j=1}^{n}\tilde
    m_j+2^{-n}e^{-n\mu}\l(\prod_{j=0}^{n-1}U_{j}\r)^*+(-1)^{n}2^{-n}e^{n\mu}\prod_{j=0}^{n-1}U_{j}\r).
    \label{equ:detD}
  \end{align}
\end{theorem}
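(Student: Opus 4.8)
The plan is to compute $\det\Df$ by block Gaussian elimination, carried out throughout on the $N\times N$ blocks and using unitarity, $U_jU_j^*=U_j^*U_j=I$, at every step. (The numbers $\tilde m_j$ are implicitly assumed well defined, so that in particular $\tilde m_1,\dots,\tilde m_{n-1}\neq0$ and the pivots below are invertible; we also read the $(n-1,n)$ entry of $\Df$ as $\tfrac{e^\mu}{2}U_{n-1}$, following the superdiagonal pattern.) First I would clear the subdiagonal: for $j=1,\dots,n-1$ in increasing order add $\tfrac{e^{-\mu}}{2\tilde m_j}U_j^*$ times block-row $j$ to block-row $j+1$, which leaves $\det\Df$ unchanged. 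Just before the $j$-th operation the $(j,j)$ block equals $\tilde m_jI$ (true for $j=1$ since $\tilde m_1=m$); afterwards the $(j+1,j)$ block is $-\tfrac{e^{-\mu}}{2}U_j^*+\tfrac{e^{-\mu}}{2\tilde m_j}U_j^*\cdot\tilde m_jI=0$, while the $(j+1,j+1)$ block becomes $mI+\tfrac{e^{-\mu}}{2\tilde m_j}U_j^*\cdot\tfrac{e^\mu}{2}U_j=\bigl(m+\tfrac1{4\tilde m_j}\bigr)I$, which is $\tilde m_{j+1}I$ whenever $j+1\le n-1$ — exactly the stated recursion. Simultaneously the last column changes: writing $c_j$ for the $(j,n)$ block after these operations (so $c_1=\tfrac{e^{-\mu}}{2}U_n^*=\tfrac{e^{-\mu}}{2}U_0^*$) and using $U_j^*(U_0\cdots U_{j-1})^*=(U_0\cdots U_j)^*$ repeatedly, one gets
\begin{align}
  c_j=\frac{e^{-j\mu}}{2^j\prod_{k=1}^{j-1}\tilde m_k}\Bigl(\prod_{k=0}^{j-1}U_k\Bigr)^{\!*}\qquad(1\le j\le n-2),
\end{align}
and $c_{n-1}$ equals this expression at $j=n-1$ plus the original superdiagonal block $\tfrac{e^\mu}{2}U_{n-1}$ (which sits in column $n$). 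The $(n,1)$ block $-\tfrac{e^\mu}{2}U_n$ is untouched.

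Next I would clear the last row: for $j=1,\dots,n-1$ in increasing order use block-row $j$ — whose only nonzero blocks are now $\tilde m_jI$ (diagonal), $\tfrac{e^\mu}{2}U_j$ (superdiagonal, for $j\le n-2$) and $c_j$ (column $n$) — to clear the $(n,j)$ block; this again preserves $\det\Df$. If $B_j$ is the block sitting in position $(n,j)$ when it is cleared (so $B_1=-\tfrac{e^\mu}{2}U_n$), the multiplier is $M_j=-\tilde m_j^{-1}B_j$, the fill pushed into $(n,j+1)$ is $B_{j+1}=-\tfrac{e^\mu}{2\tilde m_j}B_jU_j$, and the fill pushed into $(n,n)$ is $M_jc_j$; by induction $B_j=(-1)^j\tfrac{e^{j\mu}}{2^j\prod_{k=1}^{j-1}\tilde m_k}U_nU_1\cdots U_{j-1}$. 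After all $n-1$ steps the matrix is block upper triangular, with diagonal blocks $\tilde m_1I,\dots,\tilde m_{n-1}I$ and a last block $S$, so that $\det\Df=\bigl(\prod_{j=1}^{n-1}\tilde m_j\bigr)^{\!N}\det S$.

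It remains to identify $S$. It equals the $(n,n)$ block $S_0=\bigl(m+\tfrac1{4\tilde m_{n-1}}\bigr)I+\tfrac{2^{-n}e^{-n\mu}}{\prod_{k=1}^{n-1}\tilde m_k}(U_0\cdots U_{n-1})^*$ left over from the subdiagonal step, plus $\sum_{j=1}^{n-1}M_jc_j$. For $j\le n-2$ the block $c_j$ is just its $(U_0\cdots U_{j-1})^*$-part, and using $U_nU_1\cdots U_{j-1}(U_0\cdots U_{j-1})^*=U_nU_0^*=I$ one finds $M_jc_j=\tfrac{(-1)^{j+1}2^{-2j}}{\tilde m_j\prod_{k=1}^{j-1}\tilde m_k^2}\,I$; the same holds (with $j=n-1$) for the $(U_0\cdots U_{n-2})^*$-part of $c_{n-1}$. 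The one genuinely matrix-valued new term is the image of the superdiagonal seed $\tfrac{e^\mu}{2}U_{n-1}$ in $c_{n-1}$, namely $M_{n-1}\cdot\tfrac{e^\mu}{2}U_{n-1}=(-1)^n\tfrac{2^{-n}e^{n\mu}}{\prod_{k=1}^{n-1}\tilde m_k}U_nU_1\cdots U_{n-1}=(-1)^n\tfrac{2^{-n}e^{n\mu}}{\prod_{k=1}^{n-1}\tilde m_k}\prod_{k=0}^{n-1}U_k$. Collecting $I$-coefficients, $m+\tfrac1{4\tilde m_{n-1}}+\sum_{j=1}^{n-1}\tfrac{(-1)^{j+1}2^{-2j}}{\tilde m_j\prod_{k=1}^{j-1}\tilde m_k^2}$ is precisely $\tilde m_n$, so
\begin{align}
  S=\tilde m_nI+\frac{2^{-n}e^{-n\mu}}{\prod_{k=1}^{n-1}\tilde m_k}\Bigl(\prod_{k=0}^{n-1}U_k\Bigr)^{\!*}+(-1)^n\frac{2^{-n}e^{n\mu}}{\prod_{k=1}^{n-1}\tilde m_k}\prod_{k=0}^{n-1}U_k.
\end{align}
Since $\det\Df=\bigl(\prod_{j=1}^{n-1}\tilde m_j\bigr)^{\!N}\det S$, pulling the scalar $\prod_{j=1}^{n-1}\tilde m_j$ inside this $N\times N$ determinant replaces $S$ by $\prod_{j=1}^{n}\tilde m_j\,I+2^{-n}e^{-n\mu}\bigl(\prod_{k=0}^{n-1}U_k\bigr)^{\!*}+(-1)^n2^{-n}e^{n\mu}\prod_{k=0}^{n-1}U_k$, which is exactly \eqref{equ:detD}.

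The real work is the bookkeeping in the last two steps: one must carry the growing product $U_nU_1\cdots U_j$ and the precise scalar factors through the whole left-to-right sweep of the last row, and check that the scalar-$I$ part of $S$ reproduces $\tilde m_n$ term by term — it is exactly this requirement that forces the alternating-sum correction in the definition of $\tilde m_n$ — and that the minus in $-\tfrac{e^\mu}{2}U_n$, together with the $n-1$ sign flips incurred while sweeping it across to column $n-1$ and forming the last multiplier, produces the global $(-1)^n$ in front of the Polyakov loop $\prod_{k=0}^{n-1}U_k$. The degenerate small cases (notably $n=2$, where the two corner blocks overlap the bidiagonal structure) I would verify by a direct $2\times2$-block computation.
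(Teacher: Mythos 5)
Your proof is correct and is essentially the paper's argument: the paper iterates the Schur-complement identity $\det Y=\det A\,\det(D-CA^{-1}B)$ on the leading $\tilde m_jI$ block, which in one stroke performs both of your sweeps (clearing the subdiagonal and the bottom-row corner of each column simultaneously), and it produces the same recursion for $\tilde m_j$, the same accumulated products $\prod_k U_k$ in the corners, and the same alternating sum in $\tilde m_n$. Your reorganization into a forward sweep followed by a last-row sweep, and your explicit correction of the $(n-1,n)$ entry to $\tfrac{e^\mu}{2}U_{n-1}$, are only cosmetic differences from Appendix~A.
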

\begin{proof}
  appendix A 

\end{proof}
\begin{remark*}
  In particular, in the gauge satisfying $U_j=1$ except for $U_n=U$, Theorem \ref{det-reduction} yields
  \begin{align}
      \det\Df &=\det\l(\prod_{j=1}^{n}\tilde
      m_j+2^{-n}e^{-n\mu}U^*+(-1)^{n}2^{-n}e^{n\mu}U\r)      \label{equ:detDgauge}
      = \det\l(c_1 + c_2 U^*+ c_3 U\r),
  \end{align}
  with $c_1:=\prod_{j=1}^n\tilde m_j$, $c_2=2^{-n}e^{-n\mu}$, and
  $c_3=(-1)^n2^{-n}e^{n\mu}$.

  Mathematically speaking, \eqref{equ:detDgauge} is an application of
  ``Fubini''\footnote{Since all our groups are compact, they are
    unimodular and the Haar measures satisfy $h_{G\times H}=h_G\times
    h_H$ and $h_{G\rtimes H}=h_G\times h_H$ (cf., e.g., exercise 2.1.7
    in \cite{abbaspour-moskowitz}).} and translation invariance of
  the Haar measure since $\det\Df$ only depends on
  $\prod_{j=0}^{n-1}U_{j}$. We will frequently assume this form of
  $\Df$ in analytic computations and we have implemented this form of
  $\Df$ in order to reduce computational overhead. Similarly, $c_1$,
  $c_2$, and $c_3$ are standard notations in this paper. 
  Since $U \in U(N)$ or $U \in SU(N)$ $\det\Df$ is a polynomial of degree $N$.
\end{remark*}

As an observable of the model we, investigate the
  chiral condensate
\begin{align}
    \chi(m,\mu,G,n) = \d_m\ln
    Z(m,\mu,G,n)=&\frac{\d_mZ(m,\mu,G,n)}{Z(m,\mu,G,n)}=\frac{\int_G\d_m\det\Df\
      dh_G}{\int_G\det\Df\ dh_G}.
\end{align}
Since $\det\Df$ is a polynomial of degree $N$ and the derivative $\d_m$ only acts on the term $\prod_{j=1}^n\tilde m_j$ in Theorem \ref{det-reduction}, $\d_m\det\Df$ is still a polynomial of degree $N$ and $\d_m\prod_{j=1}^n\tilde m_j$ can be computed using symbolic differentiation.

Theorem \ref{det-reduction} not only allows us to reduce numerical
overhead but we can furthermore calculate the partition function \eqref{equ:partFunc}
(and therefore also the chiral condensate) analytically.
\begin{theorem}\label{int-analytic}
  Let $c_1:=\prod_{j=1}^n\tilde m_j$, $c_2=2^{-n}e^{-n\mu}$, and $c_3=(-1)^n2^{-n}e^{n\mu}$ with $\tilde m_j$ as in Theorem \ref{det-reduction}. Then,
  \begin{align}
    Z(m,\mu,U(1),n)=\int_{U(1)}\det\Df(U)\ dh_{U(1)}(U)=&c_1, \label{equ:Z_U1}
  \end{align}
  \begin{align}
    Z(m,\mu,U(2),n)=\int_{U(2)}\det\Df(U)\ dh_{U(2)}(U)=&c_1^2-c_2c_3,
  \end{align}
  \begin{align}
    Z(m,\mu,SU(2),n)=\int_{SU(2)}\det\Df(U)\ dh_{SU(2)}(U)=&c_1^2+c_2^2-c_2c_3+c_3^2,
  \end{align}
  \begin{align}
    Z(m,\mu,U(3),n)=\int_{U(3)}\det\Df(U)\ dh_{U(3)}(U)=c_1^3-2c_1c_2c_3,
  \end{align}
  and
  \begin{align}
    Z(m,\mu,SU(3),n)=\int_{SU(3)}\det\Df(U)\ dh_{SU(3)}(U)=&c_1^3-2c_1c_2c_3+c_2^3+c_3^3.
  \end{align}
\end{theorem}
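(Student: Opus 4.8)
The plan is to reduce everything, via the Remark following Theorem~\ref{det-reduction}, to computing the single integral $\int_G \det(c_1\mathbf 1_N + c_2 U^* + c_3 U)\,dh_G(U)$, and then to evaluate this by expanding the determinant into characters of $G$ and invoking their orthogonality.

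First I would use that $U$ is unitary, so $U^*=U^{-1}$ and $c_1\mathbf 1 + c_2 U^{-1} + c_3 U = U^{-1}(c_3 U^2 + c_1 U + c_2\mathbf 1)$; hence $\det\Df = \overline{\det U}\,\det(c_3 U^2 + c_1 U + c_2\mathbf 1)$. Since $c_3\neq 0$ we may write $c_3 t^2 + c_1 t + c_2 = c_3(t-a)(t-b)$, with $a+b=-c_1/c_3$ and $ab=c_2/c_3$, so that $\det\Df = c_3^N\,\overline{\det U}\,\det(U-a\mathbf 1)\,\det(U-b\mathbf 1)$. Each factor $\det(U-a\mathbf 1)$ is, up to sign, the characteristic polynomial of $U$ evaluated at $a$, hence a polynomial in $a$ whose coefficients are the elementary symmetric functions $e_k(U)=\tr\wedge^k U$ of the eigenvalues of $U$; these are exactly the characters $\chi_{\wedge^k}$ of the exterior-power representations of $U(N)$, $k=0,\dots,N$. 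Multiplying the two characteristic polynomials and using the identity $\overline{\det U}\,e_j(U)=\overline{e_{N-j}(U)}$ (valid because the eigenvalues of $U$ have modulus one) one arrives at $\det\Df$ written as an explicit finite linear combination of products $\overline{e_i(U)}\,e_l(U)$ with coefficients that are monomials in $a$ and $b$, i.e.\ polynomials in $c_1,c_2,c_3$.

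Then I would integrate term by term. On $U(N)$ the characters $\chi_{\wedge^0},\dots,\chi_{\wedge^N}$ are pairwise inequivalent irreducibles, so $\int_{U(N)}\overline{e_i}\,e_l\,dh=\delta_{il}$, which collapses the double sum to a single one; after re-expressing $a,b$ through $a+b$ and $ab$, it takes the closed form $Z(m,\mu,U(N),n)=\sum_{k=0}^N\alpha^k\beta^{N-k}$, where $\alpha,\beta$ are the roots of $t^2-c_1 t + c_2 c_3$, and evaluating this at $N=1,2,3$ yields the three $U(N)$ identities. On $SU(N)$ the only difference is that $\det U\equiv 1$, so $\wedge^N$ is the trivial representation and $\wedge^j\cong(\wedge^{N-j})^*$; consequently $\int_{SU(N)}\overline{e_i}\,e_l\,dh$ equals $1$ not only on the diagonal $i=l$ but also when $\{i,l\}=\{0,N\}$, and these two extra terms contribute exactly $c_2^N + c_3^N$. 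Hence $Z(m,\mu,SU(N),n)=Z(m,\mu,U(N),n)+c_2^N+c_3^N$, which at $N=2,3$ gives the remaining two identities; the $U(1)$ case is the degenerate $N=1$ instance and is just a single Fourier coefficient.

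The step needing the most care is the $SU(N)$ bookkeeping: checking that, relative to $U(N)$, one gains precisely the $\{i,l\}=\{0,N\}$ contributions, each with multiplicity one, and nothing else. For $N\le 3$ this reduces to the elementary facts that $\mathbf 3\otimes\bar{\mathbf 3}$ contains the trivial $SU(3)$-representation exactly once while $\mathbf 3\otimes\mathbf 3$ and $\bar{\mathbf 3}\otimes\bar{\mathbf 3}$ do not, together with the analogous statement for $SU(2)$ (where $\wedge^2$ is trivial and the fundamental is self-dual), but it should be verified explicitly. A fully elementary alternative that sidesteps representation theory is to diagonalise $U$, write $\det\Df=\prod_{k=1}^N(c_1+c_2\bar\lambda_k+c_3\lambda_k)$, expand, and integrate the resulting monomials in the eigenphases directly against the Weyl integration formula for $U(N)$ (respectively $SU(N)$); this is more computational but entirely routine in the cases at hand.
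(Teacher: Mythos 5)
Your proof is correct, and it takes a genuinely different route from the paper's. The paper's Appendix~B works entry-by-entry: it expands $\det\l(c_1+c_2U^*+c_3U\r)$ via the Leibniz/Sarrus rule into monomials in the matrix entries $U_{ij}$, $U^*_{ij}$, and integrates these using the known low-order moments $\langle i,j|k,l\rangle=\delta_{il}\delta_{jk}/N$ together with the $\eps$-tensor identities special to $SU(2)$ and $SU(3)$ --- a case-by-case computation that the authors themselves describe as ``a few tedious pages'' for $U(3)$. You instead factor $c_1+c_2U^{-1}+c_3U=c_3U^{-1}(U-a)(U-b)$, recognize $\det(U-a)$ as a generating function for the elementary symmetric polynomials $e_k(U)=\tr\wedge^kU$, i.e.\ the characters of the exterior-power representations, and reduce the integral to Schur orthogonality. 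This buys a uniform closed form for all $N$, namely $Z(m,\mu,U(N),n)=\sum_{k=0}^N\alpha^k\beta^{N-k}$ with $\alpha\beta=c_2c_3$, $\alpha+\beta=c_1$, and $Z(m,\mu,SU(N),n)=Z(m,\mu,U(N),n)+c_2^N+c_3^N$; note that the latter relation is stated in the paper (their equation relating the $SU(N)$ and $U(N)$ partition functions) only as an observation extracted from the $N\le3$ formulae, so your argument actually proves more than the theorem asks. The one step you flag yourself --- that on $SU(N)$ the only coincidences among $\wedge^0,\dots,\wedge^N$ beyond the $U(N)$ ones are $\wedge^0\cong\wedge^N\cong\mathbf{1}$, contributing exactly the two extra terms $c_2^N+c_3^N$ --- is indeed the point to make explicit, but it is standard (the $\wedge^j$ for $1\le j\le N-1$ remain pairwise inequivalent nontrivial $SU(N)$-irreducibles, and all your integrands are of the form $\overline{e_i}\,e_l$, so the self-duality of the $SU(2)$ fundamental causes no additional terms). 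Your checks at $N=1,2,3$ reproduce the five stated identities, so the proposal stands as a complete and more general proof.
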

\begin{proof}
  appendix B               

\end{proof}

In addition, we can deduce the behavior of $Z$ for $m\searrow0$.
\begin{corollary}\label{m-to-zero-limits}
  Let $\tilde m_1:=m$, $\tilde m_j:=m+\frac{1}{4\tilde m_{j-1}}$, $\tilde m_n:=m+\frac{1}{4\tilde m_{n-1}}+\sum_{j=1}^{n-1}\frac{(-1)^{j+1}4^{-j}}{m\prod_{k=1}^{j-1}\tilde m_k\tilde m_{k+1}}$, and $c_1:=\prod_{j=1}^n\tilde m_j$. Then,
  \begin{align}
    \lim_{m\searrow0}c_1=
    \begin{cases}
      2^{1-n}&,\ n\in 2\mathbb{N}\\
      0&,\ n\in2\mathbb{N}-1
    \end{cases}.
  \end{align}
  In particular,
  \begin{align}
    \lim_{m\searrow0}Z(m,\mu,U(1),n)=&
    \begin{cases}
      2^{1-n}&,\ n\in2\mathbb{N}\\
      0&,\ n\in2\mathbb{N}-1
    \end{cases},\\
    \lim_{m\searrow0}Z(m,\mu,U(2),n)=&
    \begin{cases}
      3\cdot2^{-2n}&,\ n\in2\mathbb{N}\\
      -2^{-2n}&,\ n\in2\mathbb{N}-1
    \end{cases},\\
    \lim_{m\searrow0}Z(m,\mu,SU(2),n)=&
    \begin{cases}
      3\cdot2^{-2n}+2^{1-2n}\cosh(2n\mu)&,\ n\in2\mathbb{N}\\
      2^{1-2n}\sinh(2n\mu)-2^{-2n}&,\ n\in2\mathbb{N}-1
    \end{cases},\\
    \lim_{m\searrow0}Z(m,\mu,U(3),n)=&
    \begin{cases}
      4\cdot2^{-3n}&,\ n\in2\mathbb{N}\\
      0&,\ n\in2\mathbb{N}-1
    \end{cases},\\
    \lim_{m\searrow0}Z(m,\mu,SU(3),n)=&
    \begin{cases}
      4\cdot2^{-3n}+2^{1-3n}\cosh(3n\mu)&,\ n\in2\mathbb{N}\\
      2^{1-3n}\sinh(3n\mu)&,\ n\in2\mathbb{N}-1
    \end{cases}.
  \end{align}
\end{corollary}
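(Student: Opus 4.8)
The plan is to reduce the entire statement to the single limit $\lim_{m\searrow0}c_1$. By Theorem~\ref{int-analytic} each of the five partition functions is an explicit polynomial in $c_1,c_2,c_3$, and $c_2,c_3$ do not depend on $m$; moreover $c_2c_3=(-1)^n2^{-2n}$, while the combinations $c_2^2+c_3^2$ and $c_2^3+c_3^3$ occurring there equal $2^{1-2n}\cosh(2n\mu)$ and $2^{-3n}(e^{-3n\mu}+(-1)^ne^{3n\mu})$. Hence, once $\lim_{m\searrow0}c_1$ is in hand, the five limits follow by substituting it into the formulas of Theorem~\ref{int-analytic} and using elementary hyperbolic identities such as $4\sinh^2x+3=\sinh(3x)/\sinh x$; the $U(1)$ line of the corollary is literally the statement about $c_1$.

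The first point to secure is that $c_1=\prod_{j=1}^n\tilde m_j$, though written with $m$ in denominators, is in fact a polynomial in $m$: indeed $c_1=Z(m,\mu,U(1),n)=\int_{U(1)}\det\Df$ by Theorem~\ref{int-analytic}, and $\det\Df$ is manifestly polynomial in $m$. Therefore $\lim_{m\searrow0}c_1=c_1|_{m=0}$, and it suffices to evaluate $c_1$ at $m=0$. To do this I would introduce the Jacobi determinants $D_{-1}:=0$, $D_0:=1$, $D_j:=mD_{j-1}+\tfrac14D_{j-2}$, check by induction that $\tilde m_j=D_j/D_{j-1}$ for $1\le j\le n-1$ (so $\prod_{k=1}^j\tilde m_k=D_j$, and in particular $\tilde m_j\prod_{k=1}^{j-1}\tilde m_k^2=m\prod_{k=1}^{j-1}\tilde m_k\tilde m_{k+1}=D_jD_{j-1}$, which also reconciles the two displayed forms of $\tilde m_n$), and then rewrite the corrected last term as $\tilde m_n=m+\tfrac{D_{n-2}}{4D_{n-1}}+\sum_{j=1}^{n-1}\tfrac{(-1)^{j+1}4^{-j}}{D_{j-1}D_j}$.

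The key simplification is a telescoping. From the Casoratian identity $D_{j-1}^2-D_jD_{j-2}=(-1)^{j-1}4^{1-j}$ — proved by the one-line induction $W_{j+1}=-\tfrac14W_j$ for $W_j:=D_jD_{j-2}-D_{j-1}^2$, with $W_1=-1$ — one obtains $\dfrac{(-1)^{j+1}4^{-j}}{D_{j-1}D_j}=\dfrac14\left(\dfrac{D_{j-1}}{D_j}-\dfrac{D_{j-2}}{D_{j-1}}\right)$, whence $\sum_{j=1}^{n-1}\dfrac{(-1)^{j+1}4^{-j}}{D_{j-1}D_j}=\dfrac14\dfrac{D_{n-2}}{D_{n-1}}$. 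Multiplying $\tilde m_n$ by $D_{n-1}=\prod_{j=1}^{n-1}\tilde m_j$ then collapses everything to the closed form $c_1=D_n+\tfrac14D_{n-2}=mD_{n-1}+\tfrac12D_{n-2}$. At $m=0$ the recursion degenerates to $D_j|_{m=0}=\tfrac14D_{j-2}|_{m=0}$ with $D_0|_{m=0}=1$, $D_1|_{m=0}=0$, giving $D_{2i}|_{m=0}=4^{-i}$ and $D_{2i-1}|_{m=0}=0$; substituting $j=n-2$ yields $c_1|_{m=0}=\tfrac12D_{n-2}|_{m=0}$, equal to $2^{1-n}$ for even $n$ and $0$ for odd $n$, exactly as claimed.

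The main obstacle is precisely the step just isolated: making rigorous, and taming, the singular-looking correction in $\tilde m_n$. One can alternatively push through the asymptotics directly ($\tilde m_{2k-1}\sim km$, $\tilde m_{2k}\sim(4km)^{-1}$, with the $1/m$-parts of the correction sum cancelling in consecutive pairs), but that is error-prone; recognising the correction as a telescoping series in $D_{j-2}/D_{j-1}$ makes the computation short and safe. As an independent check of $c_1|_{m=0}$ one may expand $\det\Df|_{m=0}$ combinatorially: with the diagonal annihilated, only fixed-point-free permutations supported on the cyclic tridiagonal pattern contribute, namely the two rotations of the $n$-cycle (which produce the $c_2U^*,c_3U$ terms) and the perfect matchings of the $n$-cycle — of which there are exactly two when $n$ is even, each contributing $(-\tfrac14)^{n/2}$ times the permutation sign $(-1)^{n/2}$, i.e.\ $4^{-n/2}$, and none when $n$ is odd — reproducing $2^{1-n}$, resp.\ $0$. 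Once $\lim_{m\searrow0}c_1$ is established, the corollary follows by the routine substitution into Theorem~\ref{int-analytic}.
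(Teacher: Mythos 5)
Your proposal is correct, and its core --- the evaluation of $\lim_{m\searrow0}c_1$ --- proceeds by a genuinely different route than the paper's. The paper (Appendix C) works asymptotically: it proves by induction that $\tilde m_{2j-1}\sim jm$ and $\tilde m_{2j}\sim(4jm)^{-1}$ as $m\searrow0$, computes the limits of the pair products $\tilde m_k\tilde m_{k+1}$, and then tames the singular-looking correction sum inside $\tilde m_n$ by splitting it into even and odd $j$ and exhibiting the cancellation of the $1/m$ divergences. You instead recognize $\tilde m_j$ as the ratio $D_j/D_{j-1}$ of continuants satisfying $D_j=mD_{j-1}+\tfrac14D_{j-2}$, use the Casoratian identity $D_{j-1}^2-D_jD_{j-2}=(-1)^{j-1}4^{1-j}$ to telescope the correction sum to $\tfrac14D_{n-2}/D_{n-1}$, and land on the exact closed form $c_1=mD_{n-1}+\tfrac12D_{n-2}$ valid for all $m>0$; evaluating this polynomial at $m=0$ gives $2^{1-n}$ or $0$ at once. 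Your route buys more than the paper's: an exact polynomial expression for $c_1$ rather than only its limit, a clean reconciliation of the two displayed forms of $\tilde m_n$ (Theorem~\ref{det-reduction} versus the Corollary), and no delicate $0\cdot\infty$ bookkeeping; the paper's route needs no auxiliary recurrence but is correspondingly more fragile. I checked the induction $\tilde m_j=D_j/D_{j-1}$, the Casoratian, the telescoping, and the $m=0$ values of $D_j$; all are sound, and your perfect-matching cross-check on the $n$-cycle is a valid independent confirmation.

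One caveat, which applies equally to the paper's own proof: both you and the paper dismiss the final step as ``substitute $\lim_{m\searrow0}c_1$ into Theorem~\ref{int-analytic}.'' Carrying this out literally for odd $n$, the signs produced by $c_2c_3=(-1)^n2^{-2n}$ and $c_2^N+c_3^N=2^{-Nn}\bigl(e^{-Nn\mu}+(-1)^{Nn}e^{Nn\mu}\bigr)$ do not all match the displayed odd-$n$ formulas of the Corollary (for instance, $\lim_{m\searrow0}Z(m,\mu,U(2),n)=-c_2c_3=+2^{-2n}$ for odd $n$, and $c_2^2+c_3^2=2^{1-2n}\cosh(2n\mu)$ for every $n$, not $-2^{1-2n}\sinh(2n\mu)$). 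This is an inconsistency internal to the paper's statements rather than a defect of your method; but since you explicitly record $c_2^2+c_3^2=2^{1-2n}\cosh(2n\mu)$, you should carry the substitution through and flag the discrepancy instead of asserting that the displayed formulas ``follow.''
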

\begin{proof}
  appendix C                

\end{proof}

If $n\mu$ is large and $m$ small, we can see clearly why the
integrals in Theorem \ref{int-analytic} are difficult to treat numerically; especially the $U(N)$
cases. If we assume a stochastic approach, e.g., a Monte Carlo method,
then each evaluation of $\det\Df$ in the form \eqref{equ:detDgauge} is a value in the vicinity of $|c_2|^N + |c_3|^N \approx |c_3|^N = 2^{-Nn}e^{Nn\mu}$.\footnote{$|c_2|^N+ |c_3|^N = 2^{-Nn}e^{-Nn\mu} + |(-1)^{Nn}|2^{-Nn}e^{Nn\mu} \approx 2^{-Nn}e^{Nn\mu} = |c_3|^N$, due to the fact that $e^x > e^{-x}$ for $x \in \rn_{>0}$ and the (anti)symmetric shape of $e^x \pm e^{-x}$.} However, performing the integration (or taking the limit of infinitely many samples), there is a very high degree of cancellations to be observed. Since discrete Markov Chain Monte Carlo methods perform poorly with respect to such cancellations, they have to overcome an initial error in the vicinity of $e^{Nn\mu}$. In other words, as $n\mu$ grows larger, we need very good algorithms to suppress the initial error and the convergence 
\begin{align*}
  \mathrm{error}\approx\frac{\mathrm{constant}}{\sqrt{\mathrm{sample\ size}}}
\end{align*}
of Monte Carlo methods is simply not viable anymore. For example, in
Figure \ref{fig:MC-comp}, we compare a Monte Carlo method (using
re-weighting) to our new, polynomially exact method proposed in section
\ref{sec:symmetrization} (details of the numerical tests can be found
in section \ref{Numres}). The error bars, the known rate of
convergence $\frac{1}{\sqrt{\mathrm{sample\ size}}}$, and the here
seen relative error of order $1$ show that the Monte Carlo method
cannot reach the same level of precision with a reasonable number of
samples (note the different scales for the Monte Carlo and
polynomially exact results).

\begin{figure}[h]
  \centering
  \scalebox{.973}{
  \includegraphics[width=1\textwidth]{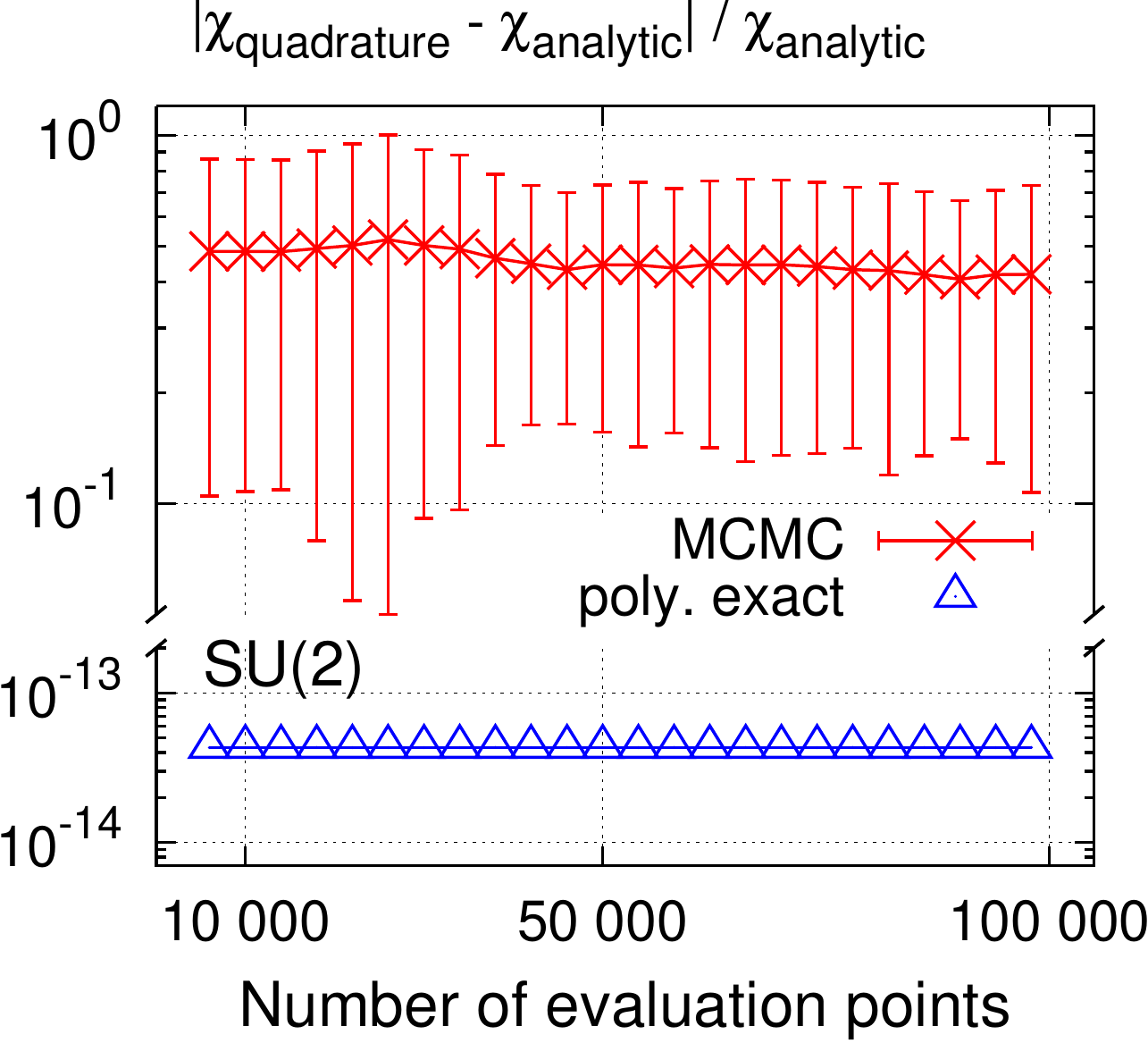}}
  \caption{Failure of MC-MC methods. Comparison of the relative error of
    the chiral condensate $\chi$ using polynomially exact (bottom) and Monte Carlo (top) quadrature rules for $SU(2)$. The polynomially exact rule used $n=8$ integration
    points, $m = 0.25$, $\mu = 1.0$, and the error bars have been computed from $20$ independent repetitions.}
  \label{fig:MC-comp}
\end{figure}

\section{Efficient quadrature rules over the compact groups}\label{sec:symmetrization}
Consider $Z(m,\mu,U(1),n)$ for the moment. As we have mentioned before, 
the problem is that the integral $\int_{U(1)}(-1)^{n}2^{-n}e^{n\mu}U\
dh_{U(1)}(U)$ in \eqref{equ:Z_U1} vanishes but the modulus of 
each evaluation $\betr{(-1)^{n}2^{-n}e^{n\mu}U}$ is large. However, if
we were also to evaluate at $-U$ 
(or, more generally at $t$ equally spaced points along the unit circle), the two terms would cancel. 
However, the (geometric) idea of taking opposite points or equally spaced points on circles, is not easy to formalize for $SU(N)$ and $U(N)$ with $N\ge2$. Instead, we should note that the quadrature rule
\begin{align}
 \int_{U(1)}f (U) dh_{U(1)}(U)
 \approx\frac{1}{t+1}\sum_{k=1}^{t+1}f\l(e^{\frac{2\pi ik}{t+1}}\r)
 \label{equ:tdesign_U1}
\end{align}
is a spherical $t$-design (i.e., an equal weights quadrature rule with 
spherical polynomial degree of exactness $t$; cf., Example 5.14 in \cite{delsarte}). 
Since $\det\Df$ is a polynomial of degree $N$ over the matrix entries for $U(N)$ and $SU(N)$, it suffices to consider $t$-designs 
or ``weighted'' $t$-designs (polynomially exact rules with possibly
non-equal weights) with $t=N$. 

In this section, we will discuss the construction of weighted $t$-designs for $N>1$ and, especially, why we base the $U(N)$ and $SU(N)$ quadrature on the quadrature
rules \cite{Genz} for the spheres $S^N$. 

Since
\begin{align}
  U(N)\cong SU(N)\rtimes U(1),
\end{align}
holds, where $\rtimes$ denotes the (outer) semi-direct product, we may construct a 
(weighted) $t$-design $Q_{U(N)}$ over $U(N)$ by considering two different (weighted) $t$-design rules $Q_{SU(N)}$ and $Q_{U(1)}$ over 
$SU(N)$ and $U(1)$ correspondingly, and then define the product rule $Q_{U(N)}=Q_{SU(N)}\times Q_{U(1)} $. It is clear that 
by defining $Q_{U(N)}$ as a product rule in this way, we obtain a (weighted) $t$-design over $U(N)$. Since $t$-designs over $U(1)$ are easy to 
construct (see \eqref{equ:tdesign_U1}), the entire problem of constructing (weighted) $t$-designs for the compact groups considered here reduces to the one of constructing 
(weighted) $t$-designs over $SU(N)$. 
 
Starting with $SU(2)$, we have a measure preserving diffeomorphism $SU(2)\cong S^3$. 
An explicit mapping can be given by
\begin{align}
  \Phi:\ \cn[2]\to\cn[2,2];\ (\alpha,\beta)\mapsto
  \begin{pmatrix}
    \alpha&-\beta^*\\
    \beta&\alpha^*
  \end{pmatrix}
\end{align}
whose restriction $\Phi|_{S^3}^{SU(2)}:\ S^3\to SU(2)$ is the mentioned measure preserving diffeomorphism. Thus, for this case we can resort to 
already well known (weighted) $t$-designs over the $3$-sphere (see \cite{Sloan2002227,Genz})  
for obtaining (weighted) $t$-designs over $SU(2)$ trough the mapping $\Phi$.\\

Moving on to $SU(3)$, we note that there is a correspondence\footnote{More precisely, $SU(N)$ is a principal $SU(N-1)$ bundle over $S^{2N-1}$; cf., e.g., \cite[equation (22.18)]{frankel}.} between $SU(3)$ and $S^{5} \times SU(2)$.
More specifically, we consider first the covering $\Phi_1: [0,2\pi)^{3}\times [0,\frac{\pi}{2})^{2}\rightarrow S^5$
defined by
\begin{align*}
x_1&= \cos(\alpha_1)\sin(\varphi_1)\\
x_2&= \sin(\alpha_1)\sin(\varphi_1)\\
x_3&= \sin(\alpha_2)\cos(\varphi_1)\sin(\varphi_2)\\
x_4&= \cos(\alpha_2)\cos(\varphi_1)\sin(\varphi_2)\\
x_5&= \sin(\alpha_3)\cos(\varphi_1)\cos(\varphi_2)\\
x_6&= \cos(\alpha_3)\cos(\varphi_1)\cos(\varphi_2)\\
\end{align*}
and note that the restriction $\Phi_1: [0,2\pi)^{3}\times (0,\frac{\pi}{2})^{2} \rightarrow  S^5_1$,
$S^5_1:=\Phi_1\left[[0,2\pi)^{3}\times (0,\frac{\pi}{2})^{2} \right]$, is a
diffeomorphism. Furthermore, the set $S^5_0:= S^5 \setminus S^5_1$ is a null set.  
On the other hand, we have the mapping 
$\Phi_2: \left([0,2\pi)^{3}\times [0,\frac{\pi}{2})^{2}\right) \times SU(2)\to  SU(3)$ defined by
\begin{align}
  \Phi_2((\alpha,\phi),U)=
  \begin{pmatrix}
    e^{i\alpha_1}\cos(\phi_1)&0&e^{i\alpha_1}\sin(\phi_1)\\
    -e^{i\alpha_2}\sin(\phi_1)\sin(\phi_2)&e^{-i\alpha_1-i\alpha_3}\cos(\phi_2)&e^{i\alpha_2}\cos(\phi_1)\sin(\phi_2)\\
    -e^{i\alpha_3}\sin(\phi_1)\cos(\phi_2)&-e^{-i\alpha_1-i\alpha_2}\sin(\phi_2)&e^{i\alpha_3}\cos(\phi_1)\cos(\phi_2)
  \end{pmatrix}
  \begin{pmatrix}
    U&0\\
    0&1
  \end{pmatrix},
\end{align}
whose restriction $\Phi_2: \left( [0,2\pi)^{3}\times (0,\frac{\pi}{2})^{2}\right) \times SU(2)  \rightarrow  SU(3)_1$ with
\begin{align}
  SU(3)_1:=\Phi_2 \left[ \left([0,2\pi)^{3}\times (0,\frac{\pi}{2})^{2}\right )\times SU(2)\right],
\end{align}
is a bijection and  the set $SU(3)_0:= SU(3) \setminus SU(3)_1$ is a Haar null set. Thus, starting with a (weighted) $t$-design rule 
$Q_{S^3}$ over $S^3$ and 
a (weighted) $t$-design $Q_{S_1^5}$ over $S^5$, such that each point of $Q_{S_1^5}$ lies in $S^5_1$, and considering the mapping
\begin{align}
  \Phi_3:\ S_1^5\times S^3\to SU(3);\ (x,y):=\Phi_2(\Phi_1^{-1}(x),\Phi(y)),
\end{align}
we obtain a quadrature rule $Q_{SU(3)}$ over $SU(3)$ by setting $Q_{SU(3)}:=\Phi_3\left[Q_{S_1^5}\times Q_{S^3}\right]$. 

In fact, by considering (randomized) fully symmetric interpolatory
rules $Q^{(1,3)}$ and $Q^{(1,5)}$ from \cite{Genz} as weighted
$t$-designs $Q_{S^3}$ and $Q_{S_1^5}$, we checked numerically that the
resulting quadrature rule $Q_{SU(3)}$ is again a weighted $t$-design
over $SU(3)$, for $t\le 3$. The latter observation drove us to
investigate a procedure more in detail for constructing weighted
$t$-design rules over $SU(N)$, for arbitrary positive integers $N$ and
$t$. This procedure is based on a generalization of the mapping
$\Phi_3$ as stated above and relies on the
correspondence\footnote{Induction over $SU(j)$ being a principal
  $SU(j-1)$ bundle over $S^{2j-1}$ \cite[equation (22.18)]{frankel}
  and $SU(2)\cong S^3$.} between $SU(N)$ and $\bigtimes_{j=1}^{N-1}
S^{2j+1}$. This new construction of quadrature rules over $SU(N)$ is
subject of current research by the authors, but the potential
applications of this new method exceed the scope of this article and
will be not reported at this point.

\section{Numerical results}\label{Numres}

In this section we will provide a comparison of the evaluation of the
partition function $Z$ and the chiral condensate $\chi$ using MC-MC
and our new polynomially exact integration rules. First we will concentrate on the partition function
$Z$. We will have a short loook at the behavior of the analytic
values of $Z$ before comparing them to the quadrature results of $Z$ using the
Monte Carlo and polynomially exact method in terms of a relative
error. To present the real power of the polynomially exact method,
we will show computational results for two different floating point number precisions. 
Then we will investigate the relative error behavior of the
chiral condensate. Since we compute
the relative error as the deviation of the quadrature result
from the computation using analytic formulae, we
explicitly differentiate these ways of computation in the following using the terms 
$Z_{\rm quadrature}$ and $Z_{\rm analytic}$.

As stated above, for the here considered model both, $Z$ and $\chi$ can be 
computed analytically for the groups $U(N)$ and $SU(N)$. In 
particular, the expression of the partition 
functions in Theorem \ref{int-analytic} for $SU(N)$ can be related to the one for $U(N)$ through
\begin{align}
  \begin{aligned}
    Z_{\rm analytic}(m,\mu,SU(N),n)=&Z_{\rm
      analytic}(m,\mu,U(N),n)+c_2^N+c_3^N\\
    =&
    Z_{\rm analytic}(m,\mu,U(N),n)+
    \begin{cases}
      2^{1-Nn}\cosh(Nn\mu)&,\ n\in2\nn\\
      -2^{1-Nn}\sinh(Nn\mu)&,\ n\in2\nn-1\; .
    \end{cases}
  \end{aligned}
  \label{equ:Zanalytic_SUN}
\end{align}
We note that for $U(N)$ the partition function smoothly approaches a much
smaller value than $c_2^N+c_3^N$ when decreasing the mass parameter
$m$ while for $SU(N)$ it approaches a constant near $c_2^N+c_3^N$ as given in
Theorem~\ref{int-analytic}, see also Corollary \ref{m-to-zero-limits}. The behavior of $Z_{\rm analytic}(m,\mu,G,n)$ as a function of the mass parameter $m$ for
$G\in\l\{U(3),SU(3)\r\}$, $n=6$, $\mu=1$, is shown in Figure
\ref{fig:comp_Z_exp} and there we can clearly see the different
behaviors of $Z_{\rm analytic}$ for $U(3)$ and $SU(3)$ for $m\searrow0$.

For the groups $U(N)$ and $SU(N)$ each point evaluation
 of the quadrature rule is of order
$O(2^{-Nn}e^{Nn\mu})$, that is, a double precision computation cannot
resolve values below $10^{-16}2^{-Nn}e^{Nn\mu}$. Since the behavior of
the partition function in comparison to the constant $c_2^N+c_3^N$
will be important in order to understand the relative error $|Z_{\rm quadrature}-Z_{\rm analytic}|/Z_{\rm analytic}$, we also show the value of
$|c_2^3+c_3^3| \approx 2^{-3n}e^{3n\mu}$ in Figure \ref{fig:comp_Z_exp}
(see discussion at the end of section
\ref{sec:1dqcd} above, as well) for the examples of $U(3)$ and $SU(3)$.
In Figure \ref{fig:comp_Z_exp}, we furthermore distinguished three regions with
different behavior, indicated in the following by region I, II and
III.

Let us first discuss the group $U(3)$. For large values of
$m$ (region III)
$2^{-3n}e^{3n\mu}$ is negligible compared to $Z_{\rm analytic}$. We
therefore expect a small deviation of $Z_{\rm quadrature}$ from $Z_{\rm
  analytic}$ and hence a small relative error. On the other hand, for small values
of $m$ (region I) $Z_{\rm analytic}$ becomes much
smaller than $2^{-3n}e^{3n\mu}$ and we expect a significant
relative error due to rounding errors. There is also a transition
regime in $m$ (region II) in which the values of $Z_{\rm analytic}$ and $2^{-3n}e^{3n\mu}$ have the same order of
magnitude. Hence, we expect a significant increase in the relative
error while decreasing $m$, but the smooth behavior of $Z_{\rm analytic}$
for $U(3)$ suggests that there will be a similarly smooth increase of
the relative error as a function of $m$. As we will
discuss below, this expectation is indeed verified in our numerical
tests.

In the $SU(3)$ case, we have the additional constant $c_2^3+c_3^3$
which, for $m$ small, is significantly larger than $Z_{\rm
  analytic}(U(3))$, see \eqref{equ:Zanalytic_SUN}. Looking at Figure
\ref{fig:comp_Z_exp}, we expect a relative error similar to the $U(3)$
case in region III. In region I, though, the relative error should be
much less now due to the fact that the analytic value and order of
magnitude of each point evaluation are closer together than in the
$U(3)$ case. In the transition region II, the behavior may be
different to $U(3)$ as well, although this is not deduced from the
figure per se but from the differences in the formulae of $Z_{\rm
  analytic}$ \eqref{equ:Zanalytic_SUN}. There, the $m$-dependent term of
$Z_{\rm analytic}$, the constant $c_2^N+c_3^N$, and the point
evaluation in the quadrature rules are of the same order of magnitude
$O(2^{-Nn}e^{Nn\mu})$. Thus, this additional term $c_2^N+c_3^N$, not
present at $U(N)$, could lead to competing effects for the relative
error and, hence, an irregular behavior of the relative error (at
least in the MC-MC case).

\begin{figure}[h!]
  \centering
  \scalebox{.973}{
  \includegraphics[width=1\textwidth]{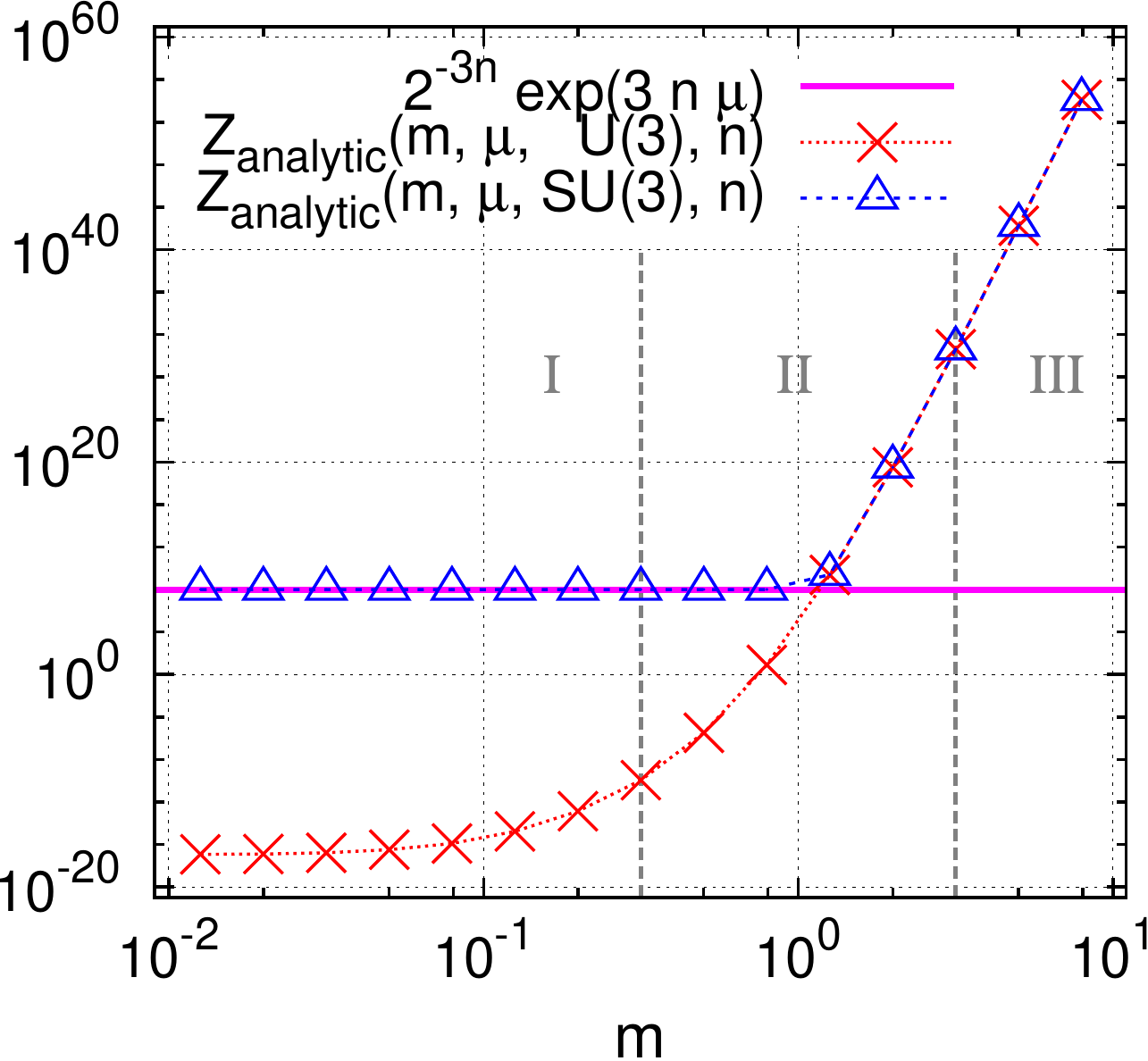}}
  \caption{Order of the quadrature rule point evaluation of the partition function
    integrand, $\left(2^{-n}e^{n\mu}\right)^3$, see \eqref{equ:detDgauge},
    compared to the analytic values of the partition functions
    for $U(3)$ and $SU(3)$ (see Theorem \ref{int-analytic}), using $n=20$, $\mu=1$. 
    As discussed in the paper, the ratio $Z_{\rm analytic} /
    2^{-3n}e^{3n\mu}$ determines the relative errors 
    of the partition function and the chiral condensate to a large extent. In particular, we identify three regions (I, II, III) in which the relative error exhibits qualitatively different behavior. (These computations were performed with $1024$bit floating point arithmetic.)}
  \label{fig:comp_Z_exp}
\end{figure}

Let us now move on to our numerical experiments. In Figure \ref{fig:comp}, we compare the quadrature rule
\begin{align}
  Z_{\rm quadrature}^{\rm MCMC}(m,\mu,G,n) = \int_G\det\Df\ dh_G\approx\frac{1}{\# Q_G}\sum_{k=1}^{\#
    Q_G}\det\Df(U_k) \label{equ:nonexactInt}
\end{align}
where each $U_k$ is chosen randomly in $G$ (uniformly with respect to the Haar measure) and the polynomially exact version
\begin{align}
  Z_{\rm quadrature}^{\rm poly.\ exact}(m,\mu,G,n) = \int_G\det\Df\ dh_G\approx\frac{1}{\# Q_G}\sum_{V\in
    Q_G}\det\Df(VU_1) \label{equ:exactInt}
\end{align}
where $U_1$ is the $U_1$ sampled in the non-exact version in \eqref{equ:nonexactInt}.\footnote{Any $U_1\in G$ would be perfectly fine; in fact, choosing the identity for $U_1$ would be a good canonical choice. However, we chose $U_1$ randomly (uniformly with respect to the Haar measure) in order to approximate the error.} Here, we chose
\begin{align}\label{eq:sym_groups}
  Q_G=
  \begin{cases}
    \l\{e^{\frac{2\pi ik}{4}};\ k\in\zn_4\r\}&;\ G=U(1)\\
    \Phi\left[Q_{S^3}\right]&;\ G=SU(2)\\
    \l\{e^{\frac{2\pi ik}{4}}U;\ k\in\zn_4,\ U\in \Phi\left[Q_{S^3}\right]\r\}&;\ G=U(2)\\
    \Phi_3\left[Q_{S_1^5}\times Q_{S^3}\right]&;\ G=SU(3)\\
    \l\{e^{\frac{2\pi ik}{4}}U;\ k\in\zn_4,\ U\in\Phi_3\left[Q_{S_1^5}\times Q_{S^3}\right]\r\}&;\ G=U(3)\\
  \end{cases}
\end{align}
where $Q_{S^3}$ and $Q_{S_1^5}$ are randomized fully symmetric 
rules of polynomial degree $3$ on $S^3$ and $S^5$ according to
\cite{Genz}. To obtain the error estimates, we repeated each numerical experiment $50$ times.
\begin{figure}[t!]
  \centering
  \includegraphics[width=1\textwidth]{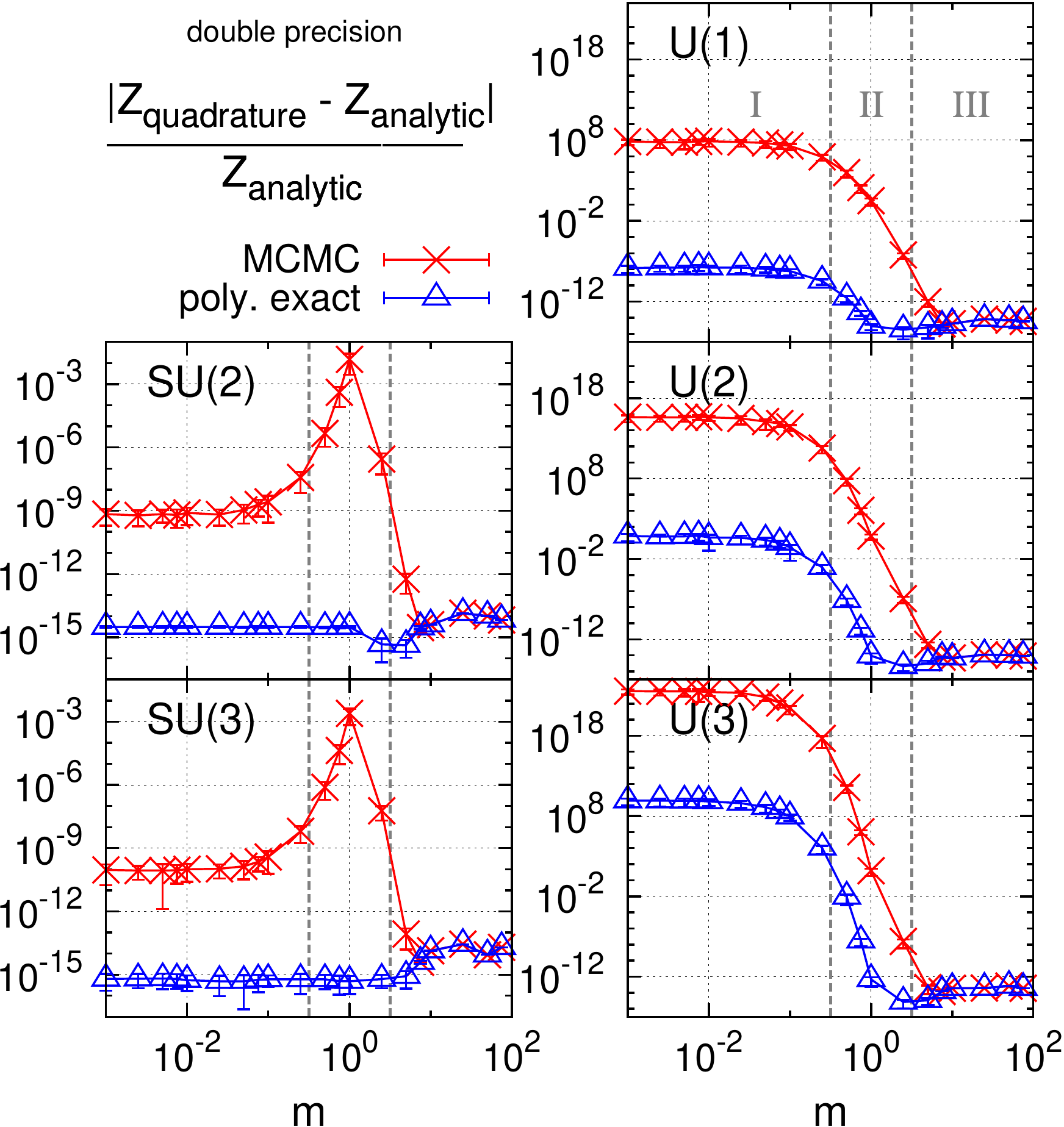}
  \caption{Comparison of the relative error of the used methods, namely the 
    polynomially exact and Monte Carlo quadrature rules to
    calculate the partition function $Z$ for
    $SU(2)$ and $SU(3)$ (left column, top to bottom), and $U(1)$,
    $U(2)$, and $U(3)$ (right column, top to bottom) with $n=20$,
    $\mu=1$, $m\in [0.001, 100]$. 
    Averages and standard deviations (error bars) have been
    computed from $50$ independent computations. Here we used double
    precision to carry through the numerical calculations. The different
    behaviors of the relative error regarding different values of $m$ 
    are divided into regions I,
    II and III, corresponding to Figure~\ref{fig:comp_Z_exp}.}
  \label{fig:comp}
\end{figure}

Figure \ref{fig:comp} shows the relative error of the partition function computed according to \eqref{equ:nonexactInt} and \eqref{equ:exactInt}. The same $m$-regions (I, II, and III), as shown in
  Figure \ref{fig:comp_Z_exp}, are indicated here as well and we can see
  that the behavior of the relative error is quite distinct
  for each of the three regions. For large values of $m$ (region III), both
methods operate with double precision as expected from 
the discussion above. 

Regarding regions I and II, we will consider the $U(N)$ case first. As
we move to smaller $m$, we
enter the transition region (II) and for $U(N)$ the relative error increases 
significantly but in a smooth way. 
As shown in Figure \ref{fig:comp_Z_exp},
for very small values of $m$ (region I) 
$Z_{\rm analytic}(m,\mu,U(N),n)$ is significantly smaller than $2^{-Nn}e^{Nn\mu}$;
hence, $Z_{\rm analytic}(m,\mu,U(N),n)$ is negligible 
compared to the machine error and we observe large relative 
errors in region I of Figure \ref{fig:comp}. Note that the polynomially 
exact computation still sums values of magnitude $2^{-Nn}e^{Nn\mu}$, 
i.e., the relative error of the exact method cannot be below $10^{-16}$ 
times the error of the non-exact method which is, indeed, what 
we see in Figure \ref{fig:comp}. Returning to 
Figure \ref{fig:comp_Z_exp} and the $U(N)$ discussion above, the
observed smooth increase of the relative error in region II
 matches our expectations.

In the $SU(N)$ case, the relative error is comparable to the
$U(N)$ case in regions I and III; we simply obtain smaller
errors in region I since $2^{-Nn}e^{Nn\mu}$ does not  dominate $Z_{\rm analytic}$ as is the case for $U(N)$. However,
in the transition region II of Figure~\ref{fig:comp}, we can see a rather irregular behavior whose possibility to occur we already mentioned
in the discussion of $Z_{\rm analytic}$ above. This can be
attributed to the fact that the mass dependent term of $Z_{\rm analytic}(m,\mu,U(N),n)$
and the constant $c_2^N+c_3^N$, see \eqref{equ:Zanalytic_SUN}, as well as the point evaluations in $Z_{\rm quadrature}$, are of the same order of magnitude. Hence neither term
can suppress the error of the other which we interpret as the cause of
the peak in the relative error.

Figure \ref{fig:comp-genz} shows the same comparison as 
Figure \ref{fig:comp} but computations were performed with 
$1024$bit
floating point arithmetic,\footnote{These are $1024$ mantissa bits; double precision (about $15$ digit precision) corresponds to $53$bit.} 
i.e., approximately $307$ digit precision. 
Again, we observe that the polynomially exact method 
operates on machine precision (as to be expected). 
The averages and standard deviations of the relative error 
were computed from $50$ independent computations for 
$G\in\{U(1),SU(2),U(2)\}$ and from $10$ independent computations 
for $G\in\{SU(3),U(3)\}$. 
All computations were performed on an IBM laptop in less than an 
hour.\footnote{Only for $SU(3)$ and $U(3)$, run-time was considerably longer than a few minutes.}
The behavior of the relative error, for both Monte-Carlo and the
polynomially exact method, is very similar to the double 
precision case in Figure~\ref{fig:comp}. Note that the polynomially 
exact integration always leads to machine precision results 
even in this extreme case of $1024$bit precision whereas the relative error of the MC-MC results does not notably decrease in regions
I and II when replacing double precision floats in Figure \ref{fig:comp} with 1024bit
extended floats in Figure \ref{fig:comp-genz}. 

\begin{figure}[t!]
  \centering
  \includegraphics[width=1\textwidth]{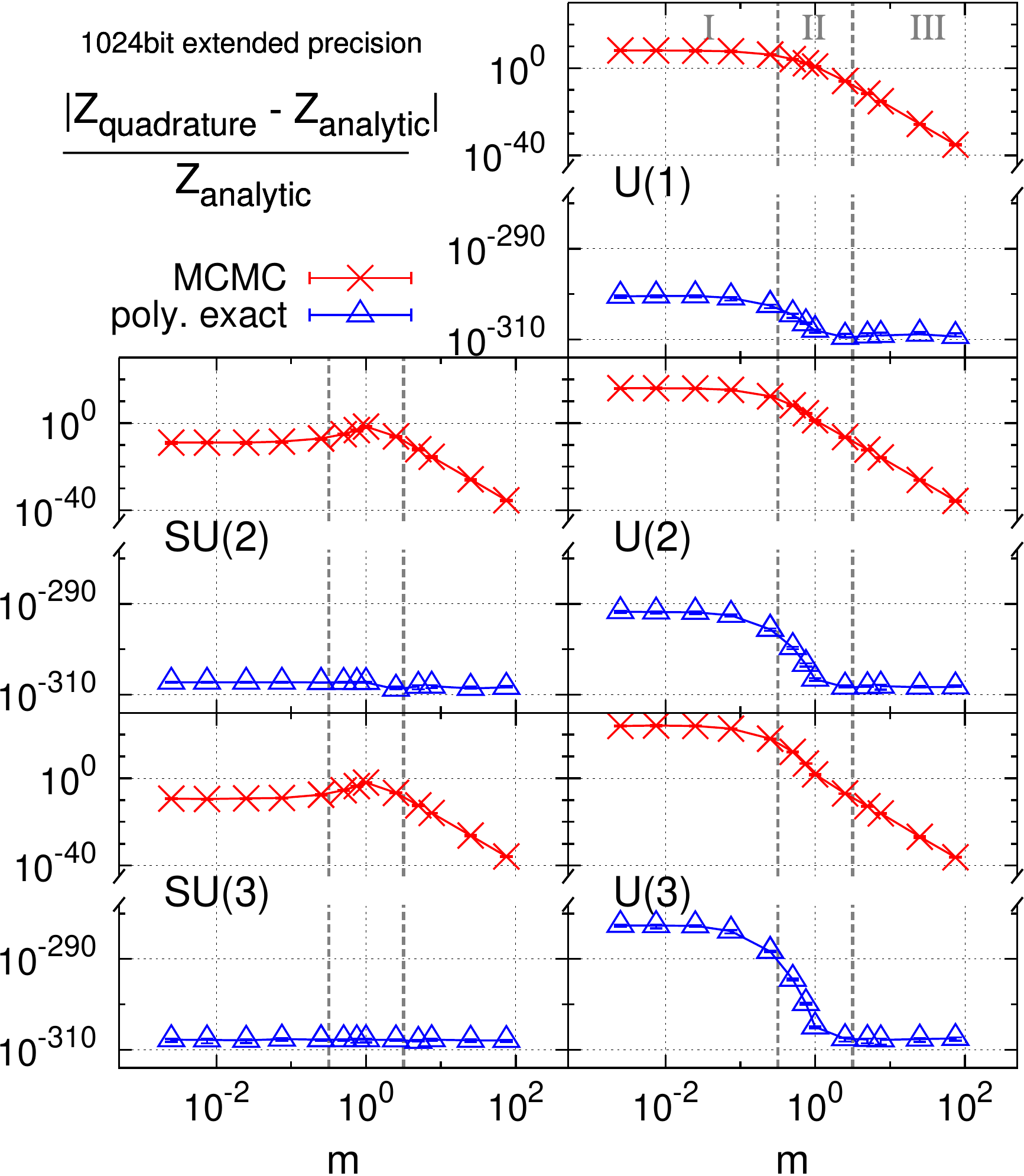}
  \caption{Comparison of the relative error as shown in Figure \ref{fig:comp}
    but here using $1024$bit extended floats. Averages and standard
    deviations (error bars) have been computed from $50$ independent
    computations for $U(1)$, $U(2)$, and $SU(2)$, and from $10$
    independent computations for $SU(3)$ and $U(3)$. Used are again $n=20$,
    $\mu=1$, $m\in [0.001, 100]$}
  \label{fig:comp-genz}
\end{figure}

In general, we observe 
in Figure~\ref{fig:comp} and Figure~\ref{fig:comp-genz} that the 
polynomially exact quadrature rule always provides machine error results.

In order to test our new polynomially exact method against 
an actual physical observable, Figure \ref{fig:comp-chiral} 
shows the comparison of the relative error of the chiral 
condensate (using $1024$bit extended floats again). The analytic values of the chiral condensate 
have been obtained through symbolic differentiation of the 
formulae in Theorem \ref{int-analytic}; the numerical values by symbolic differentiation of \eqref{equ:detDgauge}. 
We observe that the relative error follows the trend we have already seen for the partition function in the three different regions.\footnote{Here, the range of the regions differs from before.}

\begin{figure}[t!]
  \centering
  \includegraphics[width=1\textwidth]{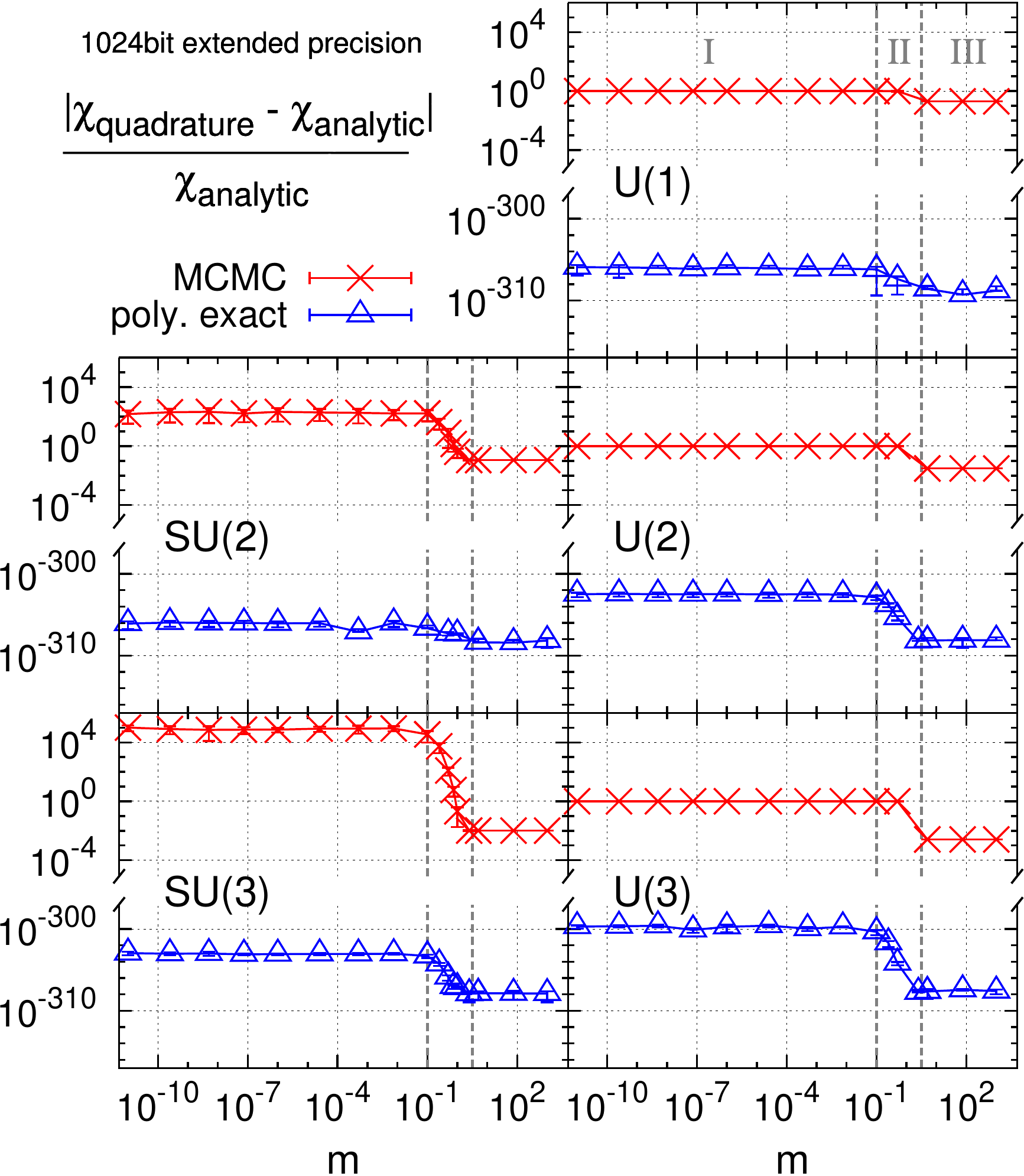}
  \caption{Comparison of
    the relative error of the chiral condensate $\chi = \partial_m \ln Z$ using polynomially exact and Monte Carlo
    quadrature rules for
    $SU(2)$ and $SU(3)$ (left column, top to bottom), and $U(1)$,
    $U(2)$, and $U(3)$ (right column, top to bottom) with $n=8$,
    $\mu=1.0$, and $m\in [10^{-11}, 10^3]$. 1024bit extended
    floats are used. Averages and standard deviations (error
    bars) have been computed from $50$ independent computations for
    $SU(2)$, $U(1)$ and $U(2)$ and from $5$ for $SU(3)$ and $U(3)$.
    The different
    behaviors of the error regarding different values of $m$ are divided into regions I,
    II and III.}
  \label{fig:comp-chiral}  
\end{figure}

Let us discuss the relative error in Figure~\ref{fig:comp-chiral} in a
bit more detail. A first observation is that the polynomially exact
method operates on the level of machine precision and, as such, reduces the relative error by (many) orders of magnitude for all
values of $m$. Even more interesting and striking is but the
size of the relative error of the chiral condensate in the small-$m$
region. As pointed out in \cite{Ravagli:2007rw}, in this region of
parameter space there is a severe sign problem. Indeed, for the MC-MC
method the relative error becomes $O(1)$ for sufficiently small $m$, i.e., no statistically significant result for the chiral
condensate can be obtained with standard
MC-MC calculations. (In Figure
\ref{fig:comp-chiral}, this behavior can only be seen for $U(N)$ but
it is also present and was observed by us for $SU(N)$ for $m$-values
smaller than the ones shown here.\footnote{The
  larger relative error $>O(1)$ for $SU(N)$ at
  small $m$ seen here is due to $\lim_{m\searrow0}\chi_{\rm analytic}(m,\mu,SU(N),n) = 0$, because
  $\lim_{m\searrow0} \partial_m Z_{\rm analytic}(m,\mu,SU(N),n) = 0$ and $\lim_{m\searrow0} Z_{\rm analytic}(m,\mu,SU(N),n) \neq 0$. 
  Thus, the analytic result for some small $m$ is already at machine
  precision while the quadrature result is not, such that division
  by this small machine precision number yields a value which can
  be larger than one.}) This is a clear manifestation of the
infamous sign problem.

In contrast, our polynomially exact method provides results on machine precision, again. Thus, the polynomially exact method completely overcomes
the sign problem and can lead to very accurate results even 
in regions where MC-MC computations are unfeasible.

\section{Conclusion}\label{sec:conclusion}

In this work, we have developed and tested a new 
integration method for the groups $U(N)$ and $SU(N)$. 
As a major outcome of our work, we could in fact provide a
numerical verification that 
the here developed method leads to polynomial exactness 
of the integration for $N\le 3$. 
We have applied the method 
to the 1-dimensional QCD with a chemical 
potential where for certain values
of the action parameters a sign problem appears with
  MC-MC methods. 
Using the groups $U(1),U(2),U(3)$ and $SU(2),SU(3)$
we have demonstrated that even for cases when the sign 
problem is most severe, the chiral condensate 
of this model can be computed to arbitrary precision
with the new method. In contrast, standard Markov Chain Monte Carlo 
methods show large $O(1)$ relative errors and do not give 
any statistically significant result. For this comparison, we even went
to 1024bit extended precision and were able to show that our new method
still achieves results on the level of machine precision. 
We, therefore, conclude that our polynomially exact integration method
can completely avoid the sign problem. 
Furthermore, it is 
important to point out that it also leads to orders of magnitude
reduced errors compared to MC-MC even in regions of parameter space 
where no sign problem occurs. 

The fact that our 
new integration method overcomes the sign problem and leads 
to orders of magnitude reduced errors in general in the here considered 
1-dimensional QCD is 
certainly a very promising finding and stands as a result by itself. 
However, this benchmark model can only 
be regarded as a toy example. It will be necessary to demonstrate that 
the method can also be applied in higher dimensions. To this end, 
we are presently considering the Schwinger model as an example
of a quantum field theory in 2 dimensions. 

Also, so far we do not have proof yet of the polynomial exactness
for the groups $U(N)$ and $SU(N)$ with general $N$.  
Although
we are very confident that our integration method leads to 
polynomial exactness for general $N$ we are working 
on a proof to substantiate this statement. 

\pagebreak
\section*{Acknowledgment}
The authors wish to express their gratitude to Prof. Andreas Griewank
for inspiring comments and
conversations, which helped to develop the work in this article.
H.L. and J.V. acknowledge financial support by the DFG-funded 
projects JA~674/6-1 and GR~705/13.


\bibliographystyle{unsrtnat}
\bibliography{Articles}

\pagebreak 

\appendix{\bf Appendix A: Proof of Theorem \ref{det-reduction}}\label{sec:proof-1}

Let 
\begin{align}
  Y=
  \begin{pmatrix}
    A&B\\C&D
  \end{pmatrix}
\end{align}
be a block decomposition where $A$ and $D$ are square matrices and $A$ is invertible. Then,
\begin{align}
  \det Y=\det A\det\l(D-CA^{-1}B\r).
\end{align}
Here, we are considering matrices of the form
\begin{align}
  X=
  \begin{pmatrix}
    m_1&\frac{e^{\mu}}{2} U_{1}&&&&\frac{e^{-\mu}}{2} U_{n}^*\\
    -\frac{e^{-\mu}}{2} U_{1}^*&m_2&\frac{e^{\mu}}{2} U_{2}&&&\\
    &-\frac{e^{-\mu}}{2} U_{2}^*&m_3&\frac{e^{\mu}}{2} U_{3}&&\\
    &&\ddots&\ddots&\ddots&\\
    &&&-\frac{e^{-\mu}}{2} U_{n-2}^*&m_{n-1}&\frac{e^{\mu}}{2} U_{n-1}\\
    -\frac{e^{\mu}}{2} U_{n}&&&&-\frac{e^{-\mu}}{2} U_{n-1}^*&m_n
  \end{pmatrix}
\end{align}
where all $m_i$ are positive. Choosing $A$ to be the $m_1$ block in $X$, we obtain
\begin{align}
  D=
  \begin{pmatrix}
    m_2&\frac{e^{\mu}}{2} U_{2}&&&&\\
    -\frac{e^{-\mu}}{2} U_{2}^*&m_3&\frac{e^{\mu}}{2} U_{3}&&&\\
    &-\frac{e^{-\mu}}{2} U_{3}^*&m_4&\frac{e^{\mu}}{2} U_{4}&&\\
    &&\ddots&\ddots&\ddots&\\
    &&&-\frac{e^{-\mu}}{2} U_{n-2}^*&m_{n-1}&\frac{e^{\mu}}{2} U_{n-1}\\
    &&&&-\frac{e^{-\mu}}{2} U_{n-1}^*&m_n
  \end{pmatrix}
\end{align}
and
\begin{align}
  -CA^{-1}B=&\frac{-1}{m_1}
  \begin{pmatrix}
    -\frac14&0&-\frac{e^{-2\mu}}{4}U_{1}^*U_{n}^*\\
    0&0&0\\
    -\frac{e^{2\mu}}{4}U_{n}U_{1}&0&-\frac14
  \end{pmatrix}.
\end{align}
In other words, $D-CA^{-1}B$ is of the initial form again and
\begin{align}
  \det X
  =&m_1^N\det(D-CA^{-1}B)\\
  =&\det
  \begin{pmatrix}
    m_2+\frac{1}{4m_1}&\frac{e^{\mu}}{2} U_{2}&&&&\frac{2^{-2}e^{-2\mu}}{m_1}U_{1}^*U_{n}^*\\
    -\frac{e^{-\mu}}{2} U_{2}^*&m_3&\frac{e^{\mu}}{2} U_{3}&&&\\
    &-\frac{e^{-\mu}}{2} U_{3}^*&m_4&\frac{e^{\mu}}{2} U_{4}&&\\
    &&\ddots&\ddots&\ddots&\\
    &&&-\frac{e^{-\mu}}{2} U_{n-2}^*&m_{n-1}&\frac{e^{\mu}}{2} U_{n-1}\\
    \frac{2^{-2}e^{2\mu}}{m_1}U_{n}U_{1}&&&&-\frac{e^{-\mu}}{2} U_{n-1}^*&m_n+\frac{1}{4m_1}
  \end{pmatrix}.
\end{align}
Let $U_0:=U_n$, $\tilde m_1:=m_1$,
\begin{align}
  \fa j\in[2,n-1]\cap\nn:\ \tilde m_j:=m_j+\frac{1}{4\tilde m_{j-1}},
\end{align}
and 
\begin{align}
  \tilde m_n:=m_n+\frac{1}{4\tilde m_{n-1}}+\sum_{j=1}^{n-1}\frac{(-1)^{j+1}2^{-2j}}{\tilde m_j\prod_{k=1}^{j-1}\tilde m_k^2}.
\end{align}
Then, we obtain inductively
\begin{align}
  &\det X\\
  =&\prod_{j=1}^{n-3}\tilde m_j^N\det
  \begin{pmatrix}
    \tilde m_{n-2}&\frac{e^{\mu}}{2} U_{n-2}&\frac{2^{-(n-2)}e^{-(n-2)\mu}}{\prod_{j=1}^{n-3}\tilde m_j}\l(\prod_{j=1}^{n-2}U_{j-1}\r)^*\\
    -\frac{e^{-\mu}}{2} U_{n-2}^*&m_{n-1}&\frac{e^{\mu}}{2} U_{n-1}\\
    \frac{(-1)^{n-2}2^{-(n-2)}e^{(n-2)\mu}}{\prod_{j=1}^{n-3}\tilde m_j}\prod_{j=1}^{n-2}U_{j-1}&-\frac{e^{-\mu}}{2} U_{n-1}^*&m_n+\sum_{j=1}^{n-3}\frac{(-1)^j2^{-2j}}{\tilde m_j\prod_{k=1}^{j-1}\tilde m_k^2}
  \end{pmatrix}\\
  =&\prod_{j=1}^{n-1}\tilde m_j^N\det\l(\tilde m_n+\frac{(-1)^{n}2^{-n}e^{n\mu}}{\prod_{j=1}^{n-1}\tilde m_j}\prod_{j=1}^{n}U_{j-1}+\frac{2^{-n}e^{-n\mu}}{\prod_{j=1}^{n-1}\tilde m_j}\l(\prod_{j=1}^{n}U_{j-1}\r)^*\r)
\end{align}
which finally yields
\begin{align}
  \det X=&\det\l(\prod_{j=1}^{n}\tilde m_j+(-1)^{n}2^{-n}e^{n\mu}\prod_{j=1}^{n}U_{j-1}+2^{-n}e^{-n\mu}\l(\prod_{j=1}^{n}U_{j-1}\r)^*\r).
\end{align}

\appendix{\bf Appendix B: Proof of Theorem \ref{int-analytic}}\label{sec:proof-2}

Note that the $U(1)$ case is trivial. Hence, we will start considering $U(N)$ with $N\ge2$ and use the notations
\begin{align}
  U_{ij}^*:=\l(U^*\r)_{ij}
\end{align}
and
\begin{align}
  \fa p\in\nn_0\ \fa I,J\in\nn[p]_{\le N}:\ U_{IJ}:=\prod_{k=0}^{p-1}U_{I_k J_k}\ \wedge\ U_{IJ}^*:=\prod_{k=0}^{p-1}\l(U^*\r)_{I_k J_k}.
\end{align}
Furthermore, we set $\fa p,q\in\nn_0\ \fa I,J\in \nn[p]_{\le N}\ \fa K,L\in \nn[q]_{\le N}:$
\begin{align}
   \langle I,J|K,L\rangle:=\int_{U(N)}U_{IJ}^*U_{KL}dh_{U(N)}(U)
\end{align}
and use abbreviations for empty sets or singletons similar to
\begin{align}
  \langle0,1|\rangle:=\langle(0),(1)|(),()\rangle.
\end{align}
The following identities are well-known (cf., e.g., \cite{Gattringer:2010zz}).
\begin{itemize}
\item $p\ne q\ \then\ \langle I,J|K,L\rangle=0$
\item $\langle|\rangle=1$
\item $\langle i,j|k,l\rangle=\frac{\delta_{il}\delta_{jk}}{N}$
\end{itemize}
For $N=2$, we may expand the determinant in
\begin{align}
  \int_{U(2)}\det\Df\ dh_{U(2)}
  =&\int_{U(2)}\det\l(c_1+c_2U^*+c_3U\r)dh_{U(2)}(U)\\
  =&\int_{U(2)}\det
  \begin{pmatrix}
    c_1+c_2U^*_{00}+c_3U_{00}&c_2U^*_{01}+c_3U_{01}\\
    c_2U^*_{10}+c_3U_{10}&c_1+c_2U^*_{11}+c_3U_{11}
  \end{pmatrix}dh_{U(2)}(U)
\end{align}
directly and, using the identities above, we obtain
\begin{align}
  \int_{U(2)}\det\Df\ dh_{U(2)}=&c_1^2-c_2c_3.
\end{align}
Similarly, we can expand the determinant in
\begin{align}
  \int_{U(3)}\det
  \begin{pmatrix}
    c_1+c_2U^*_{00}+c_3U_{00}&c_2U^*_{01}+c_3U_{01}&c_2U^*_{02}+c_3U_{02}\\
    c_2U^*_{10}+c_3U_{10}&c_1+c_2U^*_{11}+c_3U_{11}&c_2U^*_{12}+c_3U_{12}\\
    c_2U^*_{20}+c_3U_{20}&c_2U^*_{21}+c_3U_{21}&c_1+c_2U^*_{22}+c_3U_{22}
  \end{pmatrix}dh_{U(3)}(U)
\end{align}
using Sarrus' rule which yields (a few tedious pages later)
\begin{align}
  \int_{U(3)}\det\Df\ dh_{U(3)}
  =&c_1^3-2c_1c_2c_3
\end{align}
using the identities above.

For $SU(N)$, we have
\begin{itemize}
\item $p\ne q\ \then\ \langle I,J|K,L\rangle=0$
\item $\langle|\rangle=1$
\item $\langle i,j|k,l\rangle=\frac{\delta_{il}\delta_{jk}}{N}$
\item $\langle(i,j),(k,l)|\rangle=\langle|(i,j),(k,l)\rangle=-\frac{(-1)^{\delta_{ik}}}{2}=\frac{(-1)^{\eps_{ik}}}{2}$ in $SU(2)$
\item $\langle |(i,j,k),(l,m,n)\rangle=\langle (i,j,k),(l,m,n)|\rangle=\frac{\eps_{ijk}\eps_{lmn}}{6}$ in $SU(3)$
\end{itemize}
Hence, (analogous to the $U(N)$ computations)
\begin{align}
  \int_{SU(2)}\det\Df\ dh_{SU(2)}=&c_1^2+c_2^2-c_2c_3+c_3^2
\end{align}
and
\begin{align}
  \int_{SU(3)}\det\Df\ dh_{SU(3)}=&c_1^3-2c_1c_2c_3+c_2^3+c_3^3.
\end{align}

\appendix{\bf Appendix C: Proof of Corollary \ref{m-to-zero-limits}}\label{sec:proof-3}

By induction, we note for $2j<n$
\begin{align}
  \lim_{m\searrow0}\frac{\tilde m_{2j-1}}{jm}=&1\qquad\text{and}\qquad\lim_{m\searrow0}\frac{\tilde m_{2j}}{\quad\frac{1}{4jm}\quad}=1.
\end{align}
This is trivially true for $\tilde m_1=m$ and $\tilde m_2=m+\frac{1}{4m}$. Then, we observe for $j>1$
\begin{align}
  \lim_{m\searrow0}\frac{\tilde m_{2j-1}}{jm}=&\lim_{m\searrow0}\frac{m+\frac{1}{4\tilde m_{2j-2}}}{jm}
  =\lim_{m\searrow0}\frac{1}{j}+\frac{\quad\frac{1}{4\tilde m_{2j-2}}\quad}{jm}
  =\lim_{m\searrow0}\frac{1}{j}+\frac{\quad\frac{1}{4\frac{\tilde m_{2j-2}}{\quad\frac{1}{4(j-1)m}\quad}\frac{1}{4(j-1)m}}\quad}{jm}\\
  =&\frac{1}{j}+\frac{j-1}{j}
  =1
\end{align}
and
\begin{align}
  \lim_{m\searrow0}\frac{\tilde m_{2j}}{\quad\frac{1}{4jm}\quad}=&\lim_{m\searrow0}\frac{m+\frac{1}{4\tilde m_{2j-1}}}{\quad\frac{1}{4jm}\quad}
  =\lim_{m\searrow0}4jm^2+\frac{jm}{\tilde m_{2j-1}}
  =1.
\end{align}
Thus, we obtain 
\begin{align}
  \lim_{m\searrow0}\tilde m_k\tilde m_{k+1}=&
  \begin{cases}
    \lim_{m\searrow0}\frac{\tilde m_k}{jm}\frac{\tilde m_{k+1}}{\quad\frac{1}{4jm}\quad}\frac{jm}{4jm}&,\ k=2j-1\\
    \lim_{m\searrow0}\frac{\tilde m_k}{\quad\frac{1}{4jm}\quad}\frac{\tilde m_{k+1}}{(j+1)m}\frac{(j+1)m}{4jm}&,\ k=2j
  \end{cases}
  =
  \begin{cases}
    \frac{1}{4}&,\ k=2j-1\\
    \frac{j+1}{4j}&,\ k=2j
  \end{cases}
\end{align}
and for $n\in2\mathbb{N}$
\begin{align}
  \lim_{m\searrow0}c_1=&\lim_{m\searrow0}m\tilde m_n\prod_{j=1}^{\frac{n}{2}-1}\ubr{\tilde m_{2j}\tilde m_{2j+1}}_{\to\frac14\frac{j+1}{j}}\\
  =&2^{1-n}n\lim_{m\searrow0}m\tilde m_n\\
  =&2^{1-n}n\lim_{m\searrow0}m\left(m+\frac{1}{4\tilde m_{n-1}}+\sum_{j=1}^{n-1}\frac{(-1)^{j+1}4^{-j}}{m\prod_{k=1}^{j-1}\tilde m_k\tilde m_{k+1}}\right)\\
  =&2^{1-n}n\lim_{m\searrow0}\left(\frac{\frac{n}{2}m}{4\tilde m_{n-1}}\frac{2}{n}+\sum_{j=1}^{n-1}\frac{(-1)^{j+1}4^{-j}}{\prod_{k=1}^{j-1}\tilde m_k\tilde m_{k+1}}\right)\\
  =&2^{1-n}n\left(\frac{1}{2n}+\lim_{m\searrow0}\sum_{j=1}^{\frac{n}{2}}\frac{(-1)^{(2j-1)+1}4^{-(2j-1)}}{\prod_{k=1}^{(2j-1)-1}\tilde m_k\tilde m_{k+1}}+\sum_{j=1}^{\frac{n}{2}-1}\frac{(-1)^{2j+1}4^{-2j}}{\prod_{k=1}^{2j-1}\tilde m_k\tilde m_{k+1}}\right)\\
  =&2^{1-n}n\left(\frac{1}{2n}+\sum_{j=1}^{\frac{n}{2}}\frac{4^{1-2j}}{\prod_{k=1}^{2(j-1)}\lim_{m\searrow0}\tilde m_k\tilde m_{k+1}}-\sum_{j=1}^{\frac{n}{2}-1}\frac{4^{-2j}}{\prod_{k=1}^{2j-1}\lim_{m\searrow0}\tilde m_k\tilde m_{k+1}}\right)\\
  =&2^{1-n}n\left(\frac{1}{2n}+\sum_{j=1}^{\frac{n}{2}}\frac{4^{1-2j}}{4^{2-2j}\prod_{k=1}^{j-1}\frac{k+1}{k}}-\sum_{j=1}^{\frac{n}{2}-1}\frac{4^{-2j}}{4^{1-2j}\prod_{k=1}^{j-1}\frac{k+1}{k}}\right)\\
  =&2^{1-n}n\left(\frac{1}{2n}+\sum_{j=1}^{\frac{n}{2}}\frac{1}{4j}-\sum_{j=1}^{\frac{n}{2}-1}\frac{1}{4j}\right)\\
  =&2^{1-n}n\left(\frac{1}{2n}+\frac{1}{2n}\right)\\
  =&2^{1-n}.
\end{align}
Similarly, for $n\in 2\mathbb{N}-1$,
\begin{align}
  \lim_{m\searrow0}c_1=&\lim_{m\searrow0}\tilde m_n\prod_{j=1}^{\frac{n-1}{2}}\ubr{\tilde m_{2j-1}\tilde m_{2j}}_{\to\frac14}\\
  =&2^{1-n}\lim_{m\searrow0}\left(m+\frac{\quad4\frac{n-1}{2}m\quad}{4\frac{\tilde m_{n-1}}{\quad\frac{1}{4\frac{n-1}{2}m}\quad}}+\sum_{j=1}^{\frac{n-1}{2}}\frac{4^{1-2j}}{m\prod_{k=1}^{2j-2}\tilde m_k\tilde m_{k+1}}-\sum_{j=1}^{\frac{n-1}{2}}\frac{4^{-2j}}{m\prod_{k=1}^{2j-1}\tilde m_k\tilde m_{k+1}}\right)\\
  =&2^{1-n}\lim_{m\searrow0}\left(\sum_{j=1}^{\frac{n-1}{2}}\frac{4^{1-2j}}{4^{2-2j}jm}-\sum_{j=1}^{\frac{n-1}{2}}\frac{4^{-2j}}{4^{1-2j}jm}\right)\\
  =&0.
\end{align}
Finally, the asserted identities for $Z(m,\mu,G,n)$ with $G\in\{U(1),SU(2),U(2),SU(3),U(3)\}$ are a trivial corollary substituting $\lim_{m\searrow0}c_1$ into the formulae given in Theorem \ref{int-analytic}.

\end{document}